\theoremstyle{plain}
\newtheorem{theorem}{Theorem}[section]
\theoremstyle{definition}
\newtheorem{proposition}[theorem]{Proposition}
\def\vec{\mathop{\rm vec}\nolimits}
\def\dist{\mathop{\rm dist}\nolimits}
\def\prox{\mathop{\rm prox}\nolimits}
\def\argmin{\mathop{\rm argmin}\nolimits}
\newcommand{\bzero}{\boldsymbol{0}}
\newcommand{\bone}{\boldsymbol{1}}
\newcommand{\bb}{\boldsymbol{b}}
\newcommand{\bv}{\boldsymbol{v}}
\newcommand{\bw}{\boldsymbol{w}}
\newcommand{\bx}{\boldsymbol{x}}
\newcommand{\by}{\boldsymbol{y}}
\newcommand{\bz}{\boldsymbol{z}}
\newcommand{\bA}{\boldsymbol{A}}
\newcommand{\bB}{\boldsymbol{B}}
\newcommand{\bC}{\boldsymbol{C}}
\newcommand{\bD}{\boldsymbol{D}}
\newcommand{\bE}{\boldsymbol{E}}
\newcommand{\bH}{\boldsymbol{H}}
\newcommand{\bI}{\boldsymbol{I}}
\newcommand{\bL}{\boldsymbol{L}}
\newcommand{\bQ}{\boldsymbol{Q}}
\newcommand{\bU}{\boldsymbol{U}}
\newcommand{\bV}{\boldsymbol{V}}
\newcommand{\bW}{\boldsymbol{W}}
\newcommand{\bX}{\boldsymbol{X}}
\newcommand{\bY}{\boldsymbol{Y}}
\newcommand{\bZ}{\boldsymbol{Z}}
\newcommand{\balpha}{\boldsymbol{\alpha}}
\newcommand{\bbeta}{\boldsymbol{\beta}}
\newcommand{\bgamma}{\boldsymbol{\gamma}}
\newcommand{\bdelta}{\boldsymbol{\delta}}
\newcommand{\bmu}{\boldsymbol{\mu}}
\newcommand{\bnu}{\boldsymbol{\nu}}
\newcommand{\bsigma}{\boldsymbol{\sigma}}
\newcommand{\bDelta}{\boldsymbol{\Delta}}
\newcommand{\bLambda}{\boldsymbol{\Lambda}}
\newcommand{\bSigma}{\boldsymbol{\Sigma}}
\newcommand{\bPhi}{\boldsymbol{\Phi}}
\newcommand{\Real}{\mathbb{R}}
\newcommand{\rank}{\mathrm{rank}}
\begin{document}

\begin{frontmatter}
\title{Tactics for Improving Least Squares
Estimation}
\runtitle{Deweighting Weighted Least Squares with Majorization Minimization}

\begin{aug}
\author[A]{\fnms{Qiang}~\snm{Heng}\ead[label=e1]{qheng.stat@seu.edu.cn}\orcid{0000-0002-4042-6773}}
\author[B]{\fnms{Hua}~\snm{Zhou}\ead[label=e2]{huazhou@ucla.edu}\orcid{0000-0003-1320-7118}}
\author[C]{\fnms{Kenneth}~\snm{Lange}\ead[label=e3]{klange@ucla.edu}\orcid{0000-0002-1313-5030}}


\address[A]{Qiang Heng is Associate Researcher, School of Statistics and Data Science, Southeast University, Nanjing, China. The majority of this work was completed during his postdoc training at UCLA.\printead[presep={\ }]{e1}.}

\address[B]{Hua Zhou is Professor, Departments of Biostatistics and Computational Medicine,
University of California, Los Angeles, USA\printead[presep={\ }]{e2}.}

\address[C]{Kenneth Lange is Professor, Departments of Computational Medicine, Human Genetics, and Statistics, University of California, Los Angeles, USA\printead[presep={\ }]{e3}.}

\end{aug}

\begin{abstract}
This paper deals with tactics for fast computation in least squares regression in high dimensions. These tactics include: (a) the majorization-minimization (MM) principle, (b) smoothing by Moreau envelopes, and (c) the proximal distance principle for constrained estimation. In iteratively reweighted least squares, the MM principle can create a surrogate function that trades case weights for adjusted responses. Reduction to ordinary least squares then permits the reuse of the Gram matrix and its Cholesky decomposition across iterations. This tactic is pertinent to estimation in L2E regression and generalized linear models. For problems such as quantile regression, non-smooth terms of an objective function can be replaced by their Moreau envelope approximations and majorized by spherical quadratics. Finally, penalized regression with distance-to-set penalties also benefits from this perspective.  Our numerical experiments validate the speed and utility of deweighting and Moreau envelope approximations. Julia software implementing these experiments is available on our web page.
\end{abstract}

\begin{keyword}
\kwd{MM principle}
\kwd{Moreau envelope}
\kwd{matrix decomposition}
\kwd{distance majorization}
\kwd{Sylvester equation}
\end{keyword}

\end{frontmatter}

\section{Introduction}

It is fair to say that most estimation problems in classical statistics reduce to either least squares or maximum likelihood and that much of maximum likelihood estimation reduces to iteratively reweighted least squares (IRLS). Quantile regression and generalized linear model (GLM) regression are just two of the many examples of IRLS. The modern era of frequentist statistics has added to the classical inference soup sparsity, robustness, and rank restrictions. These additions complicate estimation by introducing nonsmooth terms into the objective function. In practice, smooth approximations can be substituted for nonsmooth terms without substantially impacting estimates or inference. The present paper documents the beneficial effects on model selection and solution speed of several tactics: (a) systematic application of the majorization-minimization (MM) principle, (b) substitution of ordinary least squares for weighted least squares, (c) substitution of Moreau envelopes for nonsmooth losses and penalties, and (d) application of the proximal distance principle in constrained estimation. 

The speed of an optimization algorithm is a trade-off between the cost per iteration and the number of iterations until convergence. In our view, this trade-off has not been adequately explored in IRLS.  Recall that in IRLS, we minimize the objective
\begin{eqnarray}
f(\bbeta) & = & \frac{1}{2}\sum_{i=1}^n w_{mi}(y_i-\bx_i^\top\bbeta)^2,
\label{IRLS_criterion}
\end{eqnarray}
at iteration $m$. Here $\by=(y_i)$ is the response vector, $\bX = (x_{ij})$ is the design matrix, and $\bbeta=(\beta_j)$ is the vector of regression coefficients. The dominant cost in high-dimensional IRLS is the solution of the normal equations $(\bX^\top \bW_m \bX) \bbeta = \bX^\top \bW_m \by$, where $\bW_m$ is the diagonal matrix of case weights. For dense problems, solution of the normal equations can be achieved via the Cholesky decomposition of the weighted Gram matrix $\bX^\top \bW_m\bX$ or the QR decomposition of the weighted design matrix $\bW_m^{1/2}\bX$.

Unfortunately, both decompositions must be recalculated from scratch each time the weight matrix $\bW_m$ changes. Given $p\le n$ regression coefficients, the complexity of computing the Gram matrix, weighted or unweighted, is $O(np^2)$. Fortunately, matrix multiplication is super fast on modern computers that have access to highly optimized BLAS routines. On large-scale problems with $p$ comparable to $n$, Cholesky and QR decompositions execute more slowly than matrix multiplications even though the former has computational complexity $O(p^3)$, and the latter has computational complexity $O(np^2)$. If it takes $m$ iterations until convergence, these costs must be multiplied by $m$.  In repeated unweighted regressions, a single QR decomposition or a single paired Gram matrix and Cholesky decomposition suffices, and the computational burden drops to $O(p^3)$. On the other hand, the process of deweighting may well cause  $m$ to increase.  The downside of the Cholesky decomposition approach is that the condition number of $\bX^\top\bX$ is the square of the condition number of $\bX$. In our view, this danger is overrated. Adding a small ridge penalty overcomes it and is still consistent with rapid extraction of the Cholesky decomposition. In practice, code based on Cholesky decomposition is faster than code based on QR decomposition, and our numerical experiments exploit this fact.

The majorization-minimization (MM) principle \cite{hunter2004atutorial,lange2016mm,lange2000optimization} is our engine for leveraging Moreau envelopes and converting weighted to unweighted least squares. In minimization, MM iteratively substitutes a surrogate function $g(\bbeta \mid \bbeta_m)$ that majorizes a loss $f(\bbeta)$ around the current iterate $\bbeta_m$. Majorization is defined by the tangency condition $g(\bbeta_m \mid \bbeta_m) =f(\bbeta_m)$ and the domination condition $g(\bbeta \mid \bbeta_m)\geq f(\bbeta)$ for all $\bbeta$.  The surrogate balances two goals: hugging the objective tightly and simplifying minimization. Minimizing the surrogate produces the next iterate $\bbeta_{m+1}$ and drives the objective downhill owing to the conditions
\begin{eqnarray*}
f(\bbeta_{m+1}) & \le & g(\bbeta_{m+1} \mid \bbeta_m)  \le  g(\bbeta_m \mid \bbeta_m)  =  f(\bbeta_m).
\end{eqnarray*}
In maximization, the surrogate minorizes the objective and must be maximized instead. The tangency condition remains intact, but the domination condition $g(\bbeta \mid \bbeta_m)\leq f(\bbeta)$ is now reversed. The celebrated EM (expectation-maximization) principle for maximum likelihood estimation with missing data \cite{mclachlan2007algorithm} is a special case of minorization-maximization. In the EM setting, Jensen's inequality supplies the surrogate as the expectation of the complete data log-likelihood conditional on the observed data.

In practice, many surrogate functions are strictly convex quadratics. When this is the case,  minimizing the surrogate is achieved by the Newton update
\begin{eqnarray}
\bbeta_{m+1} & = & \bbeta_m-d^2 g(\bbeta_{m} \mid \bbeta_m)^{-1}
\nabla g(\bbeta_m \mid \bbeta_m) \nonumber \\
& = & \bbeta_m-d^2 g(\bbeta_{m} \mid \bbeta_m)^{-1}
\nabla f(\bbeta_m). \label{MMquadratic_update}
\end{eqnarray}
The second form of the update reflects the tangency condition $\nabla g(\bbeta_m \mid \bbeta_m)=\nabla f(\bbeta_m)$. In this version of gradient descent, the step length is exactly $1$. The local rate of convergence is determined by how well the surrogate curvature matrix $d^2g(\bbeta_{m} \mid \bbeta_m)$ approximates the actual curvature matrix $d^2f(\bbeta_m)$.  In the current paper, we focus on quadratic surrogates of fixed curvature, denoted by $\bH$. This approach has precedent in the early statistical literature. Examples include multinomial logistic regression \cite{bohning1992multinomial}, weighted least squares \cite{kiers1997weighted}, and support vector machines \cite{groenen2008svm}. These applications emphasize the virtue of inverting the curvature matrix just once. The present work builds on this tradition in several ways. We underscore the computational value of recycling matrix decompositions across iterations, offer new theoretical insights into the convergence of the quadratic MM algorithm, and demonstrate its utility across a diverse set of new examples.

The Moreau envelope of a function $f(\bbeta)$ from $\mathbb{R}^p$ to $\mathbb{R} \cup \{\infty\}$ is defined by
\begin{eqnarray*}
M_{\mu f}(\bbeta) & =& \inf_{\bnu}\Big[f(\bnu)+\frac{1}{2\mu}\|\bbeta-\bnu\|_2^2\Big].
\end{eqnarray*}
When $f(\bbeta)$ is convex, the infimum is attained, and its Moreau envelope $M_{\mu f}(\bbeta)$ is convex and continuously differentiable \cite{nesterov2005smooth}.  However, $M_{\mu f}(\bbeta)$ also exists for many nonconvex functions $f(\bbeta)$ \cite{proximal}.  In the absence of convexity, the proximal operator
\begin{eqnarray}
\prox_{\mu f}(\bbeta) & =& \underset{\bnu}{\argmin}\Big[f(\bnu)+\frac{1}{2\mu}\|\bbeta-\bnu\|_2^2\Big] \label{prox_map_eqn}
\end{eqnarray}
can map to a set rather than a single point. For a proxable function $f(\bbeta)$, the definition of the Moreau envelope implies the quadratic majorization
\begin{eqnarray}
M_{\mu f}(\bbeta) & \le & f(\bnu_m)+\frac{1}{2\mu}\|\bbeta-\bnu_m\|_2^2
\label{Moreau_majorization}
\end{eqnarray}
at $\bnu_m$ for any $\bnu_m \in \prox_{\mu f}(\bbeta_m)$. Generally, $M_{\mu f}(\bbeta) \le f(\bbeta)$ and $\lim_{\mu \downarrow 0} M_{\mu f}(\bbeta) = f(\bbeta)$ under mild conditions. In particular, if $f(\bbeta)$ is Lipschitz with constant L, then 
\begin{eqnarray*}
0 & \le & f(\bbeta) - M_{\mu f}(\bbeta) \le \frac{L^2 \mu}{2}.
\end{eqnarray*} 
Despite the elementary nature of the Moreau majorization (\ref{Moreau_majorization}),  its value in regression has largely gone unnoticed. 

In Section~\ref{conversion_section} we rederive the majorization of \cite{heiser1987correspondence}, replacing the weighted least squares criterion \eqref{IRLS_criterion} with the unweighted surrogate
\begin{equation}\label{deweighted_loss}
g(\bbeta\mid \bbeta_m) = \frac{1}{2} \sum_{i=1}^n 
\bigl[\tilde{y}_{mi} - \mu_i(\bbeta)\bigr]^2 + c_m,
\end{equation}
where $\tilde{y}_{mi} = w_{mi}y_i + (1-w_{mi})\mu_{mi}$ is the shifted response and $c_m$ is irrelevant in the subsequent minimization. In our examples, the regression functions $\mu_i(\bbeta)=\bx_i^\top\bbeta$ depend linearly on the parameter vector $\bbeta$. The majorization \eqref{deweighted_loss} removes the weights and shifts the responses, allowing the Cholesky decomposition of $\bX^\top\bX$ to be recycled across iterations.

\begin{figure}[htbp]
  \centering
  \includegraphics[width=\linewidth, height = 0.6\linewidth]{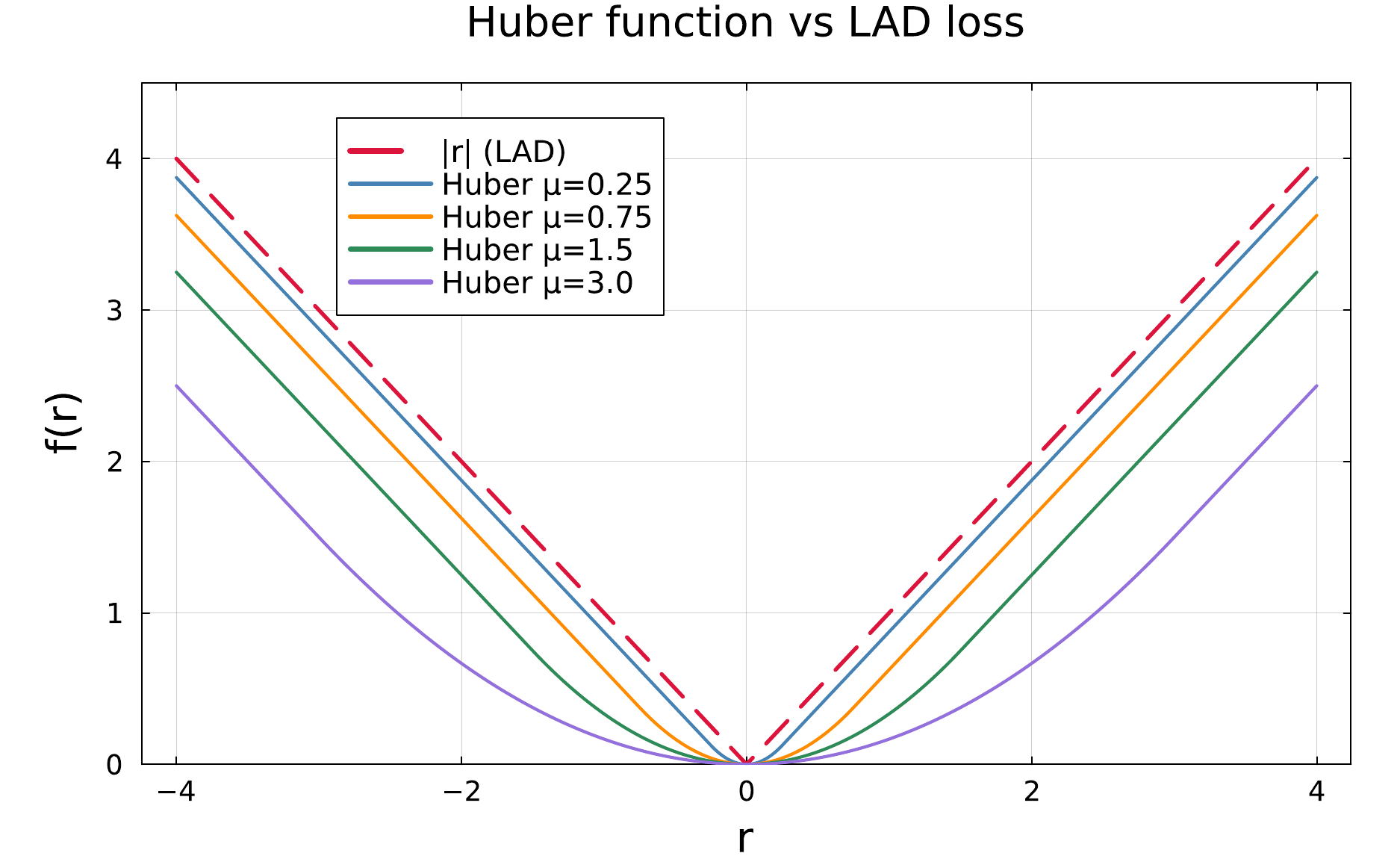}
  \caption{  Visualization of the Huber function with varying smoothing parameter $\mu$. }
  \label{huber}
\end{figure}

To render this discussion more concrete, consider the problem of least absolute deviation (LAD) regression with objective
\begin{eqnarray*}
f(\bbeta) & = & \sum_{i=1}^n |y_i-\bx_i^\top\bbeta|.
\end{eqnarray*}
The nonsmooth absolute value function has Moreau envelope equal to Huber's function
\begin{eqnarray*}
M_{\mu |\cdot|}(r) & = & \begin{cases} \frac{1}{2\mu}r^2, & |r| \le \mu  \\ |r|-\frac{\mu}{2}, & |r| > \mu 
\end{cases}
\end{eqnarray*}
with proximal map 
\begin{eqnarray}
\prox_{\mu |\cdot |}(r) & =&  \Big(1-\frac{\mu}{\max\{|r|,\mu\}}\Big)r. \label{LAD_prox}
\end{eqnarray}
The loose majorization (\ref{Moreau_majorization}) generates the least squares surrogate
\begin{eqnarray}\label{ladsurrogate}
g(\bbeta \mid \bbeta_m) & = & \frac{1}{2 \mu}\sum_{i=1}^n (r_i - z_{mi})^2,
\end{eqnarray}
where $r_i = y_i -\bx_i^\top\bbeta$, $r_{mi} = y_i -\bx_i^\top\bbeta_m$, and $z_{mi} = \prox_{\mu |\cdot|}(r_{mi})$.  Minimizing the surrogate reuses the same Cholesky decomposition at every iteration.

Alternatively, one can employ the best quadratic majorization \cite{de2009sharp}
\begin{eqnarray}\label{bestquadratic}
\quad M_{\mu |\cdot|}(r) & \le & 
\begin{cases}
\frac{1}{2|r_m|}(r-r_m)^{2}-r-\frac{1}{2}\mu & r_m \leq -\mu \\
\frac{1}{2\mu}r^{2}& |r_m|< \mu \\
\frac{1}{2|r_m|}(r-r_m)^{2}+r-\frac{1}{2}\mu & r_m \ge \mu \ 
\end{cases}.
\end{eqnarray}
This majorization translates into a weighted sum of squares plus an irrelevant constant. The resulting weighted sum of squares can be further deweighted by appealing to the surrogate \eqref{deweighted_loss}. 
Appendix \ref{double_major} shows that these two successive majorizations lead exactly to the Moreau envelope surrogate \eqref{ladsurrogate}.

Algorithm \ref{alg:LAD} summarizes a simple but effective algorithm for (smoothed) LAD regression that recycles the Cholesky decomposition of $\bX^\top\bX$.  Note that the smoothing parameter $\mu$ controls the quality of the approximation to the LAD loss; the smaller $\mu$, the more accurate the approximation. See Figure \ref{huber} for an illustration. However, as $\mu$ approaches 0, the Lipschitz constant of $M_{\mu|\cdot|}(r)$ grows, and the MM algorithm makes diminishing progress per iteration. A reasonable strategy is therefore to begin with a large value of $\mu$ and anneal it gradually toward zero. This acceleration strategy is further discussed in Section \ref{acceleration}. We also examine the impact of the smoothing parameter on the error of our estimators in Section \ref{impact}. 
\begin{algorithm}[tbp]
\caption{Fast LAD Regression}\label{alg:LAD}
\begin{algorithmic}[1]
\REQUIRE Design matrix $\bX\in \Real^{n\times p}$, response vector $\by \in \Real^{n}$,  smoothing constant $\mu>0$, and iteration number $m=0$.
\STATE Compute the Cholesky decomposition $\bL$ of $\bX^\top\bX$.
\STATE Initialize the regression coefficients $\bbeta_0= (\bL^\top)^{-1}(\bL^{-1}\bX^\top \by)$  by least squares.
\WHILE{not converged}
    \STATE For each case $i$ set $z_{mi}=\prox_{\mu |\cdot|}(y_i-\bx_i^\top \bbeta_m)$ based on formula   \eqref{LAD_prox}.
    \STATE $\bbeta_{m+1} \leftarrow (\bL^\top)^{-1}(\bL^{-1}\bX^\top (\by-\bz_m))$ update.
    \STATE $m \leftarrow m+1$ update.
\ENDWHILE
\ENSURE The final iterate $\bbeta_m$.
\end{algorithmic}
\end{algorithm}

Although the primary purpose of this paper is expository, we would like to highlight several likely new contributions. The first is our emphasis on the Moreau envelope majorization noted in equations \eqref{prox_map_eqn} and \eqref{Moreau_majorization} and recognized implicitly in the convex case by \cite{chen2012smoothing} and explicitly in general in the book \cite{lange2016mm}.  Among its many virtues, this majorization converts smoothed quantile regression into ordinary least squares.   The Moreau envelopes of various functions, such as the $\ell_0$-norm, the matrix rank function, and various set indicator functions, are invaluable in inducing sparse and low-rank structures.  The spherical majorization of the convolution-smoothed check function in Section \ref{quantile_regression_section1} and the subsequent MM algorithm is new to our knowledge. Our emphasis on the deweighting majorization of \cite{heiser1987correspondence} and \cite{kiers1997weighted} may help revive this important tactic. We also revisit the quadratic upper bound majorization of \cite{bohning1988monotonicity}. Application of this principle has long been a staple of logistic and multinomial regression \cite{bohning1992multinomial}. Inspired by the paper \cite{xu2022proximal}, we demonstrate how to reduce the normal equation of penalized multinomial regression to a Sylvester equation \cite{bartels1972algorithm}. Finally, in addition to mentioning existing theory guaranteeing the convergence of the algorithms studied here, we also present a new convergence proof based on fixed points and monotone operators. 

As a takeaway message from this paper, we hope readers will better appreciate the potential in high-dimensional estimation of combining the MM principle, smoothing, and numerical tactics such as recycling matrix decompositions, restarted Nesterov acceleration,  step-doubling, and smoothing constant annealing. These advances are ultimately as important as faster hardware.  Together the best hardware and the best algorithms will make future statistical analysis even more magical.

\section{Methods}
In this section, we first briefly review majorizations pertinent to least squares. The material covered here is largely standard. Good references are \cite{bauschke2011convex} and  \cite{beck2017first}.  After a few preliminaries, we take up: (a) conversion of weighted to unweighted least squares, (b) the quadratic bound principle, (c) integration with the proximal distance method, (d)  acceleration strategies, and (e) convergence theory.

\subsection{Notation}

Here are the notational conventions used throughout this article. All vectors and matrices appear in boldface. An entry of a vector $\bx$ or matrix $\bA$ is denoted by the corresponding subscripted lower-case letter $x_i$ or $a_{ij}$. All entries of the vector $\bzero$ equal $0$; $\bI$ indicates an identity matrix. The $^\top$ superscript indicates a vector transpose or a matrix transpose. The symbols $\bx_m$ and $\bA_m$ denote a sequence of vectors and a sequence of matrices with entries $x_{mi}$ and $a_{mij}$. The Euclidean norm and $\ell_1$ norm of a vector $\bx$ are denoted by $\|\bx\|$ and $\|\bx\|_1$. The spectral, Frobenius, and nuclear norms of a matrix $\bA$ are denoted by $\|\bA\|$, $\|\bA\|_F$, and $\|\bA\|_*$. For a smooth real-valued function $f(\bbeta)$, we write its first differential (row vector of partial derivatives) as $df(\bbeta)$. The gradient $\nabla f(\bbeta)$ is the transpose of $df(\bbeta)$. The directional derivative of $f(\bbeta)$ in the direction $\bv$ is written $d_{\bv}f(\bbeta)$. When the function $f(\bbeta)$ is differentiable, $d_{\bv}f(\bbeta)=df(\bbeta)\bv$. The second differential (Hessian matrix) of $f(\bbeta)$ is $d^2f(\bbeta)$.

\subsection{M Estimation \label{Sectionl2e}}

In M estimation \cite{de2021review,huber1974numerical,huber2011robust} based on residuals, one minimizes a function 
\begin{eqnarray*}
f(\bbeta) & = & \sum_{i=1}^n \rho(r_i) 
\end{eqnarray*}
of the residuals $r_i= y_i-\bx_i^\top\bbeta$ with $\rho(r)$ bounded below. We have already dealt with the choice $\rho(r)=|r|$ in LAD regression. In general, one can replace $\rho(r)$ by its Moreau envelope, leverage the Moreau majorization, and immediately invoke ordinary least squares. Many authors advocate estimating $\bbeta$ by minimizing a function
\begin{eqnarray*}
f(\bbeta) & = & \sum_{i=1}^n \rho(r_i^2) 
\end{eqnarray*}
of the squared residuals. Conveniently, this functional form is consistent with log-likelihoods under an elliptically symmetric distribution \cite{lange1993normal}. If $\rho(s)$ is increasing, differentiable, and concave,  then 
\begin{eqnarray*}
\rho(s) & \le &  \rho(s_m)+\rho'(s_m)(s-s_m)
\end{eqnarray*}
for all $s$. This plays out as the majorization
\begin{eqnarray*}
f(\bbeta) & \le &  \sum_{i=1}^n w_{mi}(y_i-\bx_i^\top\bbeta)^2
+c_m,
\end{eqnarray*}
with weights $w_{mi} = \rho'[(y_i-\bx_i^\top\bbeta_m)^2]$ and an irrelevant constant $c_m$ that depends on $\bbeta_m$ but not on $\bbeta$. This surrogate can be minimized
by deweighting and invoking ordinary least squares.  The Geman-McClure choice $\rho(s)= \frac{s}{1+s}$ and the Cauchy choice $\rho(s)=\log(1+s)$ for $s \ge 0$ show that the intersection of the imposed conditions is non-empty. 

A prime example \cite{chi2022user,heng2023robust} of robust M estimation revolves around the $\text{L}_2\text{E}$ objective \cite{scott2001parametric,liu2023sharper}
\begin{eqnarray*}
\label{L2E-loss}
f(\bbeta, \tau) & = & \frac{\tau}{2 \sqrt{\pi}}-\frac{\tau}{n} \sqrt{\frac{2}{\pi}} \sum_{i=1}^{n} e^{-\frac{\tau^{2}r_{i}^{2}}{2}}.
\end{eqnarray*}
This loss is particularly attractive because it incorporates a precision parameter $\tau$ as well as the regression coefficient vector $\bbeta$. Because $-e^{-s}$ is concave, one can construct the MM surrogate 
\begin{eqnarray*}
\label{majorization-r}
-e^{-\frac{\tau^{2}r_{i}^{2}}{2}} &\leq& -e^{-\frac{\tau^2r_{mi}^2}{2}} +\frac{\tau^2}{2} e^{-\frac{\tau^2 r_{mi}^2}{2}}\left(r_i^2 - r_{mi}^2\right)
\end{eqnarray*}
at iteration $m$ with $\tau$ fixed. The resulting weighted least squares surrogate can be further deweighted as explained in Section \ref{conversion_section}. Under deweighting it morphs into the surrogate
\begin{equation*}
g(\bbeta \mid \bbeta_m) = \frac{\tau^3}{2n} \sqrt{\frac{2}{\pi}} 
  \sum_{i=1}^n \left(\tilde{y}_{mi} - \bx_i^\top \bbeta\right)^2+c_m,
\end{equation*}
where $\tilde{y}_{mi} = w_{mi}y_i + (1-w_{mi})\bx_i^\top \bbeta_m$ is the shifted response and $w_{mi} = e^{-\tau^2 r_{mi}^2/2}$ is the case weight. This ordinary least squares surrogate allows one to recycle a single Cholesky decomposition across all iterations.  For $\bbeta$ fixed, one can update the precision parameter $\tau$ by gradient descent \cite{heng2023robust} or an approximate Newton's method \cite{liu2023sharper}. Both methods rely on backtracking.

\subsection{Quantile Regression \label{quantile_regression_section1}}

Quantile regression \cite{koenker1978regression} is a special case of M estimation with the objective $f(\bbeta) = \frac{1}{n}\sum_{i=1}^n \rho_q(y_i-\bx_i^\top \bbeta)$, where
\begin{eqnarray}
\rho_q(r) = \Big(q-\frac{1}{2}\Big)r+\frac{1}{2}|r| \nonumber \label{check_function}
\end{eqnarray}
is called the check function. The argument $r$ here suggests a residual. When $q=\frac{1}{2}$, quantile regression reduces to LAD regression. The check function is not differentiable at $r=0$. To smooth the objective (\ref{check_function}), one can take either its Moreau envelope or the Moreau envelope of just the nonsmooth part $\frac{1}{2}|r|$. In the former case, elementary calculus shows that the minimum is attained at
\begin{eqnarray*}
\prox_{\mu \rho_q} (r) & = & 
\begin{cases} r - q\mu, & r \ge q\mu \\ 0, & -(1-q)\mu < r < q \mu \\
r+(1-q) \mu, & r \le -(1-q)\mu \, . \end{cases}
\end{eqnarray*}
It follows that
\begin{equation*}
M_{\mu \rho_q}(r) = 
\begin{cases} 
  qr - \tfrac{\mu}{2}q^2,          & r \ge q\mu, \\
  -(1-q)r - \tfrac{\mu}{2}(1-q)^2, & r \le -(1-q)\mu, \\
  \tfrac{1}{2\mu}r^2,               & \text{otherwise.}
\end{cases}
\end{equation*}
A benefit of replacing $\rho_q(r)$ by $M_{\mu \rho_q}(r)$ is that, by definition, $M_{\mu \rho_q}(r)$ satisfies the Moreau majorization (\ref{Moreau_majorization}) at the anchor point $r_m$. This majorization can be improved, but it is convenient because it induces the uniform majorization
\begin{equation}\label{wholemoreau}
\sum_{i=1}^n M_{\mu \rho_q}(r_i)
 \le \sum_{i=1}^n \left[\rho_q(z_{mi})+\frac{1}{2\mu}(r_i-z_{mi})^2\right],
\end{equation}
where $z_{mi} = \prox_{\mu \rho_q}(r_{mi})$, which is an unweighted sum of squares in the residuals $r_i=y_i-\bx_i^\top\bbeta$.

Convolution approximation offers yet another avenue for majorization
\cite{fernandes2021smoothing,he2023smoothed,kaplan2017smoothed,tan2022high,whang2006smoothed}. If we select a well-behaved kernel $k(x) \ge 0$ with total mass $\int k(x) dx =1$, then the convolution 
\begin{eqnarray*}
C_{\mu g}(y) & = & \frac{1}{\mu} \int g(x)k\Big(\frac{y-x}{\mu}\Big)dx \\
& = &
\frac{1}{\mu} \int g(y-x)k\Big(\frac{x}{\mu}\Big)dx
\end{eqnarray*}
approximates $g(y)$ as the bandwidth $\mu \downarrow 0$. Given $g(x)=|x|$ and the symmetric uniform kernel $k(x)=\frac{1}{2}1_{\{|x| \le 1\}}$, the necessary integral 
\begin{eqnarray*}
C_{\mu |\cdot |}(y) & = & 
\frac{1}{2\mu}\int_{-\mu}^\mu |y-x| dx \\ & = &
\mu \begin{cases} \frac{1}{2}\Big[1+\Big(\frac{y}{\mu}\Big)^2\Big] & |y| \le \mu \\
\frac{|y|}{\mu} & |y| > \mu
\end{cases} 
\end{eqnarray*}
is straightforward to calculate. This is just the Moreau envelope $M_{\mu |\cdot|}(y)$ shifted upward by $\frac{\mu}{2}$. This approximate loss is now optimized via the Moreau majorization
\begin{align}\label{smoothedquantileloss}
&\Big(q-\tfrac{1}{2}\Big)r+\tfrac{1}{2}C_{\mu|\cdot|}(r) \notag\\
\leq\; &\frac{1}{4\mu}\big[r - z_{m} + (2q-1)\mu\big]^2+c_m,
\end{align}
where $c_m$ is an irrelevant constant and $z_{m} =  \prox_{\mu | \cdot |}(r_{m})$. Although our numerical experiments in quantile estimation feature the convolution-smoothed quantile loss \eqref{smoothedquantileloss}, the choice \eqref{wholemoreau} works equally well. What is more important is the transformation of smoothed quantile regression into a sequence of ordinary least squares problems with shifted responses. 

\subsection{Sparsity and Low Matrix Rank}
The $\ell_0$-norm, the rank function, and their corresponding constraint sets are helpful in inducing sparsity and low matrix rank. Directly imposing these nonconvex penalties or constraint sets can lead to intractable optimization problems. A common middle ground is to use convex surrogates, such as the $\ell_1$-norm and the nuclear norm, but these induce shrinkage bias and a surplus of false positives in support recovery. The Moreau envelopes of these nonconvex penalties offer an attractive alternative that leverages differentiability and majorization. 

The $\ell_0$ norm has Moreau envelope \cite{beck2017first}
\begin{eqnarray*}
M_{\mu \|\cdot \|_0}(\bbeta) & = & \underset{\bnu}\min \Big[\|\bnu\|_0+\frac{1}{2\mu}\|\bbeta-\bnu\|_2^2\Big] \\
& = & \sum_{j=1}^p \min_{\nu_j} \Big[1_{\{\nu_j \ne 0\}}+\frac{1}{2\mu}(\beta_j-\nu_j)^2\Big] \\
& = & \sum_{j=1}^p \begin{cases}
1, & \frac{1}{2}\beta_j^2 \ge \mu \\ \frac{1}{2\mu}\beta_j^2, 
& \frac{1}{2}\beta_j^2 < \mu \,
\end{cases}.
\end{eqnarray*}
The corresponding proximal map sends $\beta_j$ to itself when $\frac{1}{2}\beta_j^2 \ge \mu$ and to $0$ when $\frac{1}{2}\beta_j^2 < \mu$. The Moreau envelope of the $0/\infty$ indicator of the sparsity set $S_k=\{\bbeta: \|\bbeta\|_0 \le k\}$ is determined by projection onto $S_k$. Elementary arguments show that 
\begin{eqnarray*}
M_{\mu \delta_{S_k}}(\bbeta) & = & \min_{\bnu} \Big[\delta_{S_k}(\bnu)+\frac{1}{2\mu}\|\bbeta-\bnu\|_2^2\Big] \\
& = & \frac{1}{2\mu}\sum_{j=1}^{p-k} \beta_{(j)}^2,
\end{eqnarray*}
where $\beta_{(1)},\ldots,\beta_{(p-k)}$ denote the $p-k$ smallest $\beta_j$ in magnitude \cite{beck2017first}. The Moreau majorization defined by equation (\ref{Moreau_majorization}) is available in both examples. In general, the Moreau envelope of the $0/\infty$ indicator of a set $S$ is $\frac{1}{2\mu}\dist(\bbeta, S)^2$, where $\dist(\bbeta,S)$ is the Euclidean distance  from $\bbeta$ to $S$. Other nonconvex sparsity inducing penalties $p(\beta)$, such as the SCAD penalty \cite{SCAD} and the MCP penalty \cite{zhang2010nearly}, also have straightforward Moreau envelopes \cite{proximal}.

For matrices, low-rank is often preferred to sparsity in estimation and inference. Let $R_k$ denote the set of $p \times c$ matrices of rank $k$ or less. The Moreau envelope of the $0/\infty$ indicator of $R_k$ turns out to be  $\frac{1}{2\mu}\dist(\bB,R_k)^2 =  \frac{1}{2\mu}\sum_{j>k}\sigma_j^2$, where the $\sigma_j$  are the singular values of $\bB$ in descending order. If $\bB$ has singular value decomposition $\bU\bSigma\bV^\top$, then the corresponding proximal map takes $\bB$ to $\bU\bSigma_k\bV^\top$, where $\bSigma_k$ retains the 
$k$ largest singular values and sets the rest to zero.  This result is just the content of the Eckart-Young theorem.  The  function $\rank(\bB)$ has Moreau envelope \cite{hiriart2013eckart}
 \begin{eqnarray*}
M_{\mu \, \rank}(\bB) & = & M_{\mu \|\cdot \|_0}(\bsigma)  =  \sum_i \begin{cases}
1, & \frac{1}{2}\sigma_i^2 \ge \mu \\ \frac{1}{2\mu}\sigma_i^2, & \frac{1}{2}\sigma_i^2 < \mu \, 
\end{cases}
\end{eqnarray*}
The map $\prox_{\mu \,\rank}(\bB) = \bU\bSigma_\mu\bV^\top$ retains $\sigma_j$ when $\frac{1}{2}\sigma_j^2 \ge \mu$ and sets it to $0$ otherwise. In other words, the proximal operator hard thresholds the singular values.  In both cases the Moreau majorization \eqref{Moreau_majorization} is available.

The nuclear norm $\| \bB \|_*=\sum_i \sigma_i$ \cite{recht2010guaranteed} is a widely adopted convex penalty in low-rank estimation. Its proximal map soft thresholds all $\sigma_i$ by subtracting $\mu$ from each, then replacing negative values by $0$, and finally setting $\prox_{\mu \|\cdot\|_*}(\bB)= \bU\tilde{\bSigma}\bV^\top$, where $\tilde{\bSigma}$ contains the thresholded values. The Moreau envelope $M_{\mu\| \cdot \|_*}(\bB)$ serves as a smooth-approximation to $\|\bB\|_*$ and benefits from the Moreau majorization \eqref{Moreau_majorization}.

\subsection{Deweighting Weighted Least Squares}
\label{conversion_section}

To remove the weights and shift the responses in least squares, \cite{heiser1987correspondence} and \cite{kiers1997weighted} suggest a crucial deweighting majorization. At iteration $m$, assume that the weights satisfy $0 \le w_{mi} \le 1$ and that $\mu_i$ abbreviates the mean $\mu_i(\bbeta)$ of case $i$. Then
\begin{eqnarray*}
&&w_{mi}(y_i - \mu_i)^2\\
&=& w_{mi}\bigl[(y_i-\mu_{mi}) + (\mu_{mi}-\mu_i)\bigr]^2 \\
&=& w_{mi}(y_i-\mu_{mi})^2
    + w_{mi}(\mu_{mi}-\mu_i)^2 \\
&&+\; 2w_{mi}(y_i-\mu_{mi})(\mu_{mi}-\mu_i) \\
&=& w_{mi}^2(y_i-\mu_{mi})^2
    + w_{mi}(\mu_{mi}-\mu_i)^2 \\
&&+\; 2w_{mi}(y_i-\mu_{mi})(\mu_{mi}-\mu_i)
    + d_{mi} \\
&\le& w_{mi}^2(y_i-\mu_{mi})^2
     + (\mu_{mi}-\mu_i)^2 \\
&&+\; 2w_{mi}(y_i-\mu_{mi})(\mu_{mi}-\mu_i)
     + d_{mi} \\
&=& \bigl[w_{mi}(y_i-\mu_{mi}) + (\mu_{mi}-\mu_i)\bigr]^2
    + d_{mi} \\
&=& \bigl[w_{mi}y_i + (1-w_{mi})\mu_{mi} - \mu_i\bigr]^2
    + d_{mi},
\end{eqnarray*}
where $d_{mi} = w_{mi}(1-w_{mi})(y_i-\mu_{mi})^2 \ge 0$
and $d_m = \sum_{i=1}^n d_{mi}$ does not depend on $\bmu$. Equality holds, as it should, when $\mu_i=\mu_{mi}$. Derivation of the surrogate requires the assumption $0 \le w_{mi} \le 1$. Because rescaling does not alter the minimum point, one can divide $w_{mi}$ by $w_{\max} = \max_j w_{mj}$ at iteration $m$ to ensure $0 \le w_{mi} \le 1$.

\subsection{Quadratic Upper and Lower Bound Principles \label{quad_bound_section}}
The quadratic upper bound principle applies to functions $f(\bbeta)$ with bounded curvature \cite{bohning1988monotonicity}. Given that $f(\bbeta)$ is twice differentiable, we seek a matrix $\bH$ satisfying $\bH \succeq d^2f(\bbeta)$ and $\bH \succ {\bf 0}$ in the sense that $\bH-d^2f(\bbeta)$ is positive semidefinite for all $\bbeta$ and $\bH$ is positive definite.  The quadratic bound principle then amounts to the majorization
\begin{eqnarray*}
f(\bbeta) & = & f(\bbeta_m) + df(\bbeta_m)(\bbeta-\bbeta_m) \nonumber 
 +\,(\bbeta-\bbeta_m)^\top \\ &&\int_0^1 d^2f[\bbeta_m+t(\bbeta-\bbeta_m)](1-t)\,dt\,(\bbeta-\bbeta_m) \nonumber \\
& \le & f(\bbeta_m) + df(\bbeta_m)(\bbeta-\bbeta_m) \\&&+\frac{1}{2}(\bbeta-\bbeta_m)^\top\bH(\bbeta-\bbeta_m) \label{MMconvex_4} \\
& = & g(\bbeta \mid \bbeta_m). \nonumber
\end{eqnarray*}
The quadratic surrogate is exactly minimized by the update (\ref{MMquadratic_update}) with $d^2g(\bbeta_m \mid \bbeta_m)$ replaced by $\bH$. The quadratic lower bound principle involves minorization and subsequent maximization. 

The quadratic upper bound principle applies to logistic regression and its generalization to multinomial regression \cite{bohning1992multinomial,krishnapuram2005sparse}. In multinomial regression, items are drawn from $c$ categories numbered $1$ through $c$. Logistic regression is the special case $c=2$. If $y_i$ denotes the response of sample $i$, and $\bx_i$ denotes a corresponding predictor vector, then the probability $w_{ij}$ that $y_i=j$ is
\begin{eqnarray*}
w_{ij}(\bB) & = & \begin{cases} \frac{e^{r_{ij}}}{1+\sum_{k=1}^{c-1} e^{r_{ik}}}, & 1 \le j < c \\
\frac{1}{1+\sum_{k=1}^{c-1} e^{r_{ik}}}, & j=c 
\end{cases}
\end{eqnarray*}
where $r_{ij} =\bx_i^\top \bbeta_j$. To estimate the regression coefficient vectors $\bbeta_j$ over $n$ independent trials, we must find a quadratic lower bound for the log-likelihood
\begin{eqnarray}\label{multinomiallikelihood}
\quad \quad\mathcal{L}(\bB) & = & 
\sum_{i=1}^n \begin{cases} r_{iy_i} - \ln (1+\sum_{j=1}^{c-1} e^{r_{ij}}), & 1 \le y_i < c \\
- \ln (1+\sum_{j=1}^{c-1} e^{r_{ij}}), &y_i=c 
\end{cases}
\end{eqnarray}
Here $\bB$ is the matrix with $j$th column $\bbeta_j$. It suffices to find a quadratic lower bound for each sample, so we temporarily drop the subscript $i$ to simplify notation. 

The log-likelihood has directional derivative 
\begin{eqnarray*}
d_{\bV}\mathcal{L}(\bB)  & = & 1_{\{y < c\}} \bx^\top \bv_{y}  -\frac{\sum_{j=1}^{c-1} e^{r_j}\bx^\top\bv_j}{1+\sum_{j=1}^{c-1} e^{r_j}} \\
& = & [(\by-\bw) \otimes \bx]^\top \vec(\bV)
\end{eqnarray*}
for the perturbation $\bV$ of $\bB$ with $j$th column $\bv_j$. In the Kronecker product representation of $d_{\bV}\mathcal{L}(\bB)$, $\by$ is now an indicator vector of length $c-1$ with a $1$ in entry $j$ if the sample belongs to category $j$, and $\bw$ is the weight vector with value $w_{j} = \frac{e^{r_{j}}}{1+\sum_{k=1}^{c-1} e^{r_{k}}}$ in entry $j<c$. The quadratic form associated with the second differential $d^2\mathcal{L}(\bB)$ is
\begin{eqnarray*}
d^2_{\bV} \mathcal{L}(\bB) & = & -\frac{\sum_{j=1}^{c-1} e^{r_j}(\bx^\top\bv_j)^2}{1+\sum_{j=1}^{c-1} e^{r_j}}
+ \frac{(\sum_{j=1}^{c-1} e^{r_j}\bx^\top\bv_j)^2}{(1+\sum_{j=1}^{c-1} e^{r_j})^2} .
\end{eqnarray*} 
Across all cases \cite{bohning1992multinomial} derives the lower bound
\begin{eqnarray*}
d^2_{\bV} \mathcal{L}(\bB)
& \ge & -\frac{1}{2} \vec(\bV)^\top \left[ \bE\otimes \left(\sum_{i=1}^n \bx_i\bx_i^\top\right) \right] \vec(\bV),
\end{eqnarray*}
where the $(c-1) \times (c-1)$ matrix $\bE=\frac{1}{2}(\bI - \frac{1}{c}\bone \bone^\top)$
has explicit inverse $\bE^{-1} = 2(\bI+\bone \bone^\top)$.
The full-sample multinomial log-likelihood is then minorized by the quadratic
\begin{eqnarray}\label{quadboundmultinomial}
\mathcal{L}(\bB) & \ge & \mathcal{L}(\bB_m) 
+ \sum_{i=1}^n [(\by_i-\bw_{mi}) \otimes \bx_i]^\top \vec(\bDelta) \nonumber \\
& & -\frac{1}{2} \vec(\bDelta)^\top 
\left[ \bE\otimes \bX^\top\bX \right] \vec(\bDelta),\nonumber
\end{eqnarray}
where $\bDelta = \bB - \bB_m$. Maximizing the surrogate yields the update
\begin{eqnarray*}
\vec(\bDelta_{m+1}) & = & 
\left[ \bE \otimes \bX^\top\bX \right]^{-1} 
\sum_{i=1}^n (\by_i-\bw_{mi}) \otimes \bx_i \\
& = & 
\left[ \bE^{-1}\otimes (\bX^\top \bX)^{-1} \right] 
\sum_{i=1}^n (\by_i-\bw_{mi}) \otimes \bx_i,
\end{eqnarray*}
or in matrix form
\begin{eqnarray}\label{multinomialupdate}
\bDelta_{m+1} & = & (\bX^\top \bX)^{-1} \bX^\top (\bY - \bW_m)\bE^{-1},
\end{eqnarray}
so that $\bB_{m+1} = \bB_m + \bDelta_{m+1}$. Here $\bW_m\in \Real^{n\times(c-1)}$
conveys the weights for the first $c-1$ categories and the rows of
$\bY\in \Real^{n\times(c-1)}$ are indicator vectors conveying category membership.
In the special case of logistic regression, the solution reduces to
\begin{eqnarray*}
\bbeta_{m+1} & = & \bbeta_{m} + 4(\bX^\top \bX)^{-1} \bX^\top(\by -\bw_m).
\end{eqnarray*}
We also consider low-rank multinomial regression driven by the penalized loss $-\frac{1}{n}\mathcal{L}(\bB) + \lambda M_{\mu\|\cdot\|_*}(\bB)$. Applying the minorization \eqref{quadboundmultinomial} and the Moreau majorization yields the MM surrogate
\begin{align*}
    g(\bB \mid \bB_m) 
    &= \frac{1}{n}\vec[\bX^\top(\bW_m - \bY)]^\top \vec(\bDelta) \\
    &\quad + \frac{1}{2n}\vec(\bDelta)^\top 
    \bigl[\bE \otimes \bX^\top\bX\bigr] \vec(\bDelta) \\
    &\quad + \frac{\lambda}{2\mu}\bigl\|\bB - 
    \prox_{\mu\|\cdot\|_*}(\bB_m)\bigr\|_F^2 + c_m,
\end{align*}
whose stationary condition is given by
\begin{eqnarray*}
\bzero & = & \vec(\bC_m)-\left(\frac{1}{n}\bE \otimes \bX^\top \bX + \frac{\lambda}{\mu} \bI\right) \vec(\bDelta),
\end{eqnarray*}
with $\bC_m = \frac{1}{n}\bX^\top (\bY - \bW_m) + \frac{\lambda}{\mu}(\prox_{\mu\|\cdot\|_*}(\bB_m)-\bB_m)$ and $\bDelta=\bB-\bB_m$. This condition is the same as the two equivalent matrix equations
\begin{eqnarray*}
\frac{1}{n}\bX^\top \bX \bDelta \bE + \frac{\lambda}{\mu} \bDelta & = & \bC_m \\
\frac{1}{n}\bX^\top \bX \bDelta + \frac{\lambda}{\mu} \bDelta \bE^{-1} & = & \bC_m \bE^{-1}.
\end{eqnarray*}
The second of these is a Sylvester equation \cite{bartels1972algorithm} for the matrix $\bDelta$. It can be solved by extracting the spectral decompositions $\bX^\top\bX = \bU\bSigma_1\bU^\top$ and $\bE^{-1} = \bV\bSigma_2\bV^\top$, left multiplying the equation by $\bU^\top$, and right multiplying the equation by $\bV$. If in the result
\begin{eqnarray*}
\frac{1}{n}\bSigma_1\bU^\top \bDelta \bV + \frac{\lambda}{\mu}\bU^\top \bDelta \bV \bSigma_2 & = & 
\bU^\top \bC_m \bV \bSigma_2
\end{eqnarray*}
we treat $\bU^\top \bDelta \bV$ as a new variable $\bZ=(z_{ij})$, then this simpler Sylvester equation has the entry-by-entry solution
\begin{eqnarray*}
z_{ij} & = & \frac{(\bU^\top \bC_m \bV)_{ij} \sigma_{2j}}{\frac{1}{n}\sigma_{1i} + \frac{\lambda}{\mu} \sigma_{2j}},
\end{eqnarray*}
where $\sigma_{1i}$ and $\sigma_{2j}$ are the diagonal entries of $\bSigma_1$ and $\bSigma_2$.
Overall, this highly efficient way of updating $\bB$ requires just two spectral decompositions, which can be recycled across all iterations and all values of $\lambda$ and $\mu$. Computation of the proximal map $\prox_{\mu\|\cdot\|_*}(\bB_m)$ must still be performed at each iteration.

\subsection{Integration with the Proximal Distance Principle}

The proximal distance principle is designed to minimize a loss function $f(\bbeta)$ subject to $\bbeta \in C$, where $C$ is a nonempty closed set \cite{keys2019proximal,landeros2022extensions,landeros2023inaugural}. For instance, $C$ could be the sparsity set $\{ \bbeta\in \Real^p: \| \bbeta\|_0 \le k\}$. The general idea of the proximal distance method is to approximate the solution by minimizing the penalized loss $f(\bbeta)+\frac{\lambda}{2}\dist(\bbeta,C)^2$ for $\lambda>0$ large. The squared Euclidean distance is majorized by the spherical quadratic $\frac{\lambda}{2}\|\bbeta-P_C(\bbeta_m)\|_2^2$, where $P_C(\bbeta)$ denotes the Euclidean projection of $\bbeta$ onto $C$. If $f(\bbeta)$ is an ordinary sum of squares, then the surrogate $f(\bbeta)+\frac{\lambda}{2}\|\bbeta-P_C(\bbeta_m)\|^2$ remains within this realm. Otherwise, it may be that $f(\bbeta)$ is majorized by an ordinary sum of squares $g(\bbeta \mid \bbeta_m)$. The overall surrogate $g(\bbeta \mid \bbeta_m)+\frac{\lambda}{2}\|\bbeta-P_C(\bbeta_m)\|^2$ is then also an ordinary sum of squares.

\begin{algorithm}[h]
\caption{Proximal Distance Algorithm}
\label{alg:proximal_distance}
\begin{algorithmic}[1]
\REQUIRE loss function $f$, constraint set $C$, initial point $\bbeta_0$,
         annealing rate $\rho > 1$, tolerances $\epsilon_{\mathrm{in}},
         \epsilon_{\mathrm{out}} > 0$
\ENSURE approximate constrained minimizer $\bbeta$
\STATE $\bbeta \leftarrow \bbeta_0$, $\lambda \leftarrow \lambda_0$
\WHILE{$\|\bbeta - P_C(\bbeta)\|_2 \geq \epsilon_{\mathrm{out}}$}
    \REPEAT
        \STATE $\bbeta \leftarrow \argmin_{\bbeta'}\; g(\bbeta' \mid \bbeta) 
               + \tfrac{\lambda}{2}\|\bbeta' - P_C(\bbeta)\|_2^2$
    \UNTIL{$\text{convergence criterion} < \epsilon_{\mathrm{in}}$}
    \STATE $\lambda \leftarrow \rho \cdot \lambda$
\ENDWHILE
\RETURN $\bbeta$
\end{algorithmic}
\end{algorithm}

In the proximal distance algorithm, simply setting $\lambda$ equal to a large 
constant can cause the procedure to converge prematurely and prevent adequate 
exploration of parameter space. Although the converged value of $\bbeta$ will 
be close to the constraint set $C$, the influence of the loss will be slight. 
The remedy is to start with a small value of $\lambda$ and gradually increase 
it at a slow geometric rate to a final desired level.  For a given 
$\lambda$, we leverage distance majorization to optimize the penalized loss.
Once the convergence criterion is met, we increment $\lambda$ and start anew. Algorithm~\ref{alg:proximal_distance} summarizes the process. Our 
previous papers \cite{keys2019proximal,landeros2022extensions} provide 
essential algorithmic details such as annealing schedules and stopping 
criteria. In this paper, we take a slightly different approach in dealing 
with the Moreau envelope parameter $\mu$ governing the smoothed 
$\ell_0$ and nuclear norm penalties. Rather than adopting the inner/outer 
iteration structure of Algorithm~\ref{alg:proximal_distance}, we either 
hold $\mu$ fixed or anneal it simultaneously with the iterative 
updates.

Sparse quantile regression provides a concrete example of the proximal distance 
principle in action. After distance penalization, the smoothed quantile loss 
(\ref{smoothedquantileloss}) is majorized by the surrogate function
\begin{eqnarray}\label{eq:surrogatesq2}
g(\bbeta \mid\bbeta_m) & = & \frac{1}{4n\mu} \sum_{i=1}^n 
[y_i - z_{mi} + (2q-1)\mu - \bx_i^\top\bbeta ]^2 \nonumber \\
&&+ \frac{\lambda}{2} \| \bbeta-P_{S_k}(\bbeta_m)\|_2^2+ c_m.
\end{eqnarray}
The stationary condition is then
\begin{eqnarray*}
\Big(\frac{1}{2n\mu}\bX^\top\bX +\lambda\bI_p\Big) \bbeta 
& = & \frac{1}{2n\mu}\bX^\top\tilde{\by} + \lambda P_{S_k}(\bbeta_m),
\end{eqnarray*}
where $\tilde{y}_i = y_i - z_{mi} + (2q-1)\mu$ are the shifted responses.
Let $M = \frac{1}{2n\mu}$ and $\bb_m = M\bX^\top\tilde{\by} 
+ \lambda P_{S_k}(\bbeta_m)$ for brevity. Given the pre-computed spectral 
decomposition $\bX^\top\bX = \bU\bLambda\bU^\top$ of the $p \times p$ Gram 
matrix, the normal equations have the explicit solution
\begin{eqnarray*}
\bbeta_{m+1} & = & \bU \big( M\bLambda + \lambda \bI_p \big)^{-1}
\bU^\top \bb_m.
\end{eqnarray*}
Although a spectral decomposition is more expensive to compute than a Cholesky 
decomposition, a single decomposition suffices across all values of $\lambda$. 
In contrast, the Cholesky decomposition must be recomputed each time $\lambda$ 
changes.  When $p \gg n$, we instead work with the spectral decomposition 
$\bX\bX^\top = \bQ\bPhi\bQ^\top$ of the $n\times n$ matrix $\bX\bX^\top$. 
The Woodbury identity gives
\begin{eqnarray*}
&&\big(M\bX^\top\bX + \lambda\bI_p\big)^{-1}\\
& = & \frac{1}{\lambda}\bI_p 
- \frac{M}{\lambda^2}\bX^\top\bQ
\Big(\bI_n+\frac{M}{\lambda}\bPhi\Big)^{-1}\bQ^\top\bX
\end{eqnarray*}
and the update becomes
\begin{eqnarray*}
\bbeta_{m+1} & = & \frac{1}{\lambda}\bb_m
- \frac{M}{\lambda^2}\bX^\top\bQ
\Big(\bI_n+\frac{M}{\lambda}\bPhi\Big)^{-1}\bQ^\top\bX\bb_m.
\end{eqnarray*}
Since $\bX\bX^\top$ is $n \times n$, the spectral decomposition costs $O(n^3)$ 
rather than $O(p^3)$, and the matrix--vector products drop from $O(p^2)$ to $O(np)$, yielding substantial savings 
in high-dimensional problems.

Sparse quantile regression can also be achieved by adding the Moreau envelope 
penalty $\lambda M_{\alpha \|\cdot \|_0}(\bbeta)$ to the loss. (Observe that 
$\mu$ and $\alpha$ are distinct smoothing parameters; nothing prevents us from 
equating them.) The surrogate function (\ref{eq:surrogatesq2}) remains the same, 
provided we substitute the proximal map $\prox_{\alpha\|\cdot\|_0}(\bbeta_m)$ 
for the projection $P_{S_k}(\bbeta_m)$ and the constant $\frac{\lambda}{2\alpha}$ 
for the constant $\frac{\lambda}{2}$. If $\bbeta_\lambda$ denotes the minimum 
point of the penalized loss, then the curve $\lambda \mapsto 
M_{\alpha \|\cdot \|_0}(\bbeta_\lambda)$ now becomes an object of interest. 
This curve should smoothly approximate a jump function with downward jumps 
of $1$ as $\lambda$ increases.

\subsection{Acceleration Strategies}\label{acceleration}

Conversion of weighted to ordinary least squares comes at the expense of an increased number of iterations until convergence. This adverse consequence can be mitigated by various acceleration strategies. 

\paragraph*{Nesterov acceleration} Nesterov acceleration \cite{nesterov1983method} is an important acceleration technique that is easier to implement than to understand. In practice, one maintains both the current iterate $\bbeta_m$ and the previous iterate $\bbeta_{m-1}$. The shifted point $\balpha_m=\bbeta_m+\frac{m-1}{m+2}(\bbeta_m-\bbeta_{m-1})$ is then taken as the base point in majorization-minimization. The next point $\bbeta_{m+1}$ is no longer guaranteed to reduce the objective. If the descent property fails, we take the standard precaution of restarting the Nesterov acceleration at $m=1$.

\paragraph*{Step-doubling} 
Step-doubling \cite{de1980multidimensional,van2016gensvm} is another 
simple yet effective acceleration device. 
Let $\bbeta^*$ denote the minimizer of the surrogate 
$g(\bbeta \mid \bbeta_m)$. Rather than taking $\bbeta^*$ as the next 
iterate, step doubling proposes
\begin{eqnarray*}
\bbeta_{m+1} & = & \bbeta_m + 2(\bbeta^* - \bbeta_m) 
\; = \; 2\bbeta^* - \bbeta_m.
\end{eqnarray*}
It is worth noting that, if the surrogate function is quadratic, then step-doubling never increases the objective. The key distinction between step-doubling and Nesterov is that Nesterov extrapolates before the MM step ($\bbeta_{m+1}=T( \bbeta_m + \tfrac{m-1}{m+2}(\bbeta_m - \bbeta_{m-1}))$), whereas step-doubling extrapolates after the MM step ($\bbeta_{m+1} = 2T(\bbeta_m) - \bbeta_m$). Here $T(\bbeta)
    = \bbeta - \bH^{-1}\nabla f(\bbeta)$ is the quadratic MM mapping.

\paragraph*{Smoothing Constant Annealing}
Replacing nonsmooth loss functions and penalties with their Moreau envelopes produces a smoother optimization problem. Larger values of $\mu$ yield smoother approximations at the cost of greater deviation from the original objective.  Sending $\mu$ to $0$ recovers the original function at the cost of increasing computational difficulty. To balance statistical accuracy with computational efficiency, we initialize $\mu$ at a large value $\mu_{\max}$ to capitalize on rapid early progress and then gradually decrease $\mu$ toward zero. When $\mu$ is exceedingly small, the curvature of the Moreau envelope near the origin becomes extremely steep, causing the algorithm to make little or no progress. Consequently, we impose a lower bound $\mu_{\min} > 0$, below which $\mu$ is not allowed to go. The value $\mu_{\min}$ governs the statistical properties of our smoothed estimators.

Other popular acceleration strategies include Anderson acceleration \cite{anderson1965iterative} and SQUAREM \cite{varadhan2008simple}. In our experience, both are more complex to implement and less effective on the problems considered here. 

\subsection{Convergence Theory \label{convergence_theory_section}}

As a prelude to establishing the convergence of our MM algorithms, one needs to prove that an optimal point actually exists. This issue is usually tackled in minimization by showing that the objective $f(\bbeta)$ has compact sublevel sets $\{\bbeta: f(\bbeta) \le c\}$. This notion is implied by the coerciveness condition $\lim_{\|\bbeta\|\to \infty}f(\bbeta)=\infty$, which in turn is implied by strong convexity. In the absence of strong convexity, one can check that a finite convex function $f(\bbeta)$ is coercive by checking that it is coercive along all nontrivial rays $\{\bbeta \in \Real^p: \bbeta=\balpha+t\bv,\, t \ge 0\}$ emanating from some arbitrarily chosen base point $\balpha$ \cite{lange2016mm}. For instance in quantile regression, the objective $f(\bbeta)$ is convex and 
\begin{eqnarray*}
f(\bbeta+t\bv) & = & \sum_{i=1}^n \rho_q(y_i-\bx_i^\top \bbeta - t \bx_i^\top \bv).
\end{eqnarray*}
If $\bX$ has full rank and $\bv \ne \bzero$, then at least one inner product $\bx_i^\top\bv$ does not vanish. The corresponding term $\rho_q(y_i-\bx_i^\top \bbeta - t \bx_i^\top \bv)$ then tends to $\infty$ as $t$ tends to $\infty$. Given that all other terms are nonnegative, $f(\bbeta)$ is coercive.

As pointed out earlier, our algorithm maps assume the form
\begin{eqnarray*}
\mathcal{M}(\bbeta) & = & \bbeta - t_{\bbeta}\bH(\bbeta)^{-1} \nabla f(\bbeta), \label{algorithm_map}
\end{eqnarray*}
where the matrix $\bH(\bbeta) = d^2 g(\bbeta \mid \bbeta)$ is positive definite and $t_{\bbeta} \in (0,1)$ is a step-size multiplier chosen by backtracking. For our MM algorithms, the choice $t_{\bbeta}=1$ always decreases the objective. \cite{lange2013optimization} and \cite{xu2017generalized} prove the following proposition pertinent to our problems.
\begin{proposition} \label{backtracking_convergence}
Suppose the objective $f(\bbeta)$ has compact sublevel sets and $\bH(\bbeta)$ is continuous and positive definite. If the step size $t_{\bbeta}$ is selected by Armijo backtracking, then the limit points of the sequence $\bbeta_{m+1} = \mathcal{M}(\bbeta_{m})$ are stationary points of $f(\bbeta)$. Moreover, the set of limit points is compact and connected. If all stationary points are isolated, then the sequence $\bbeta_m$ converges to one of them.
\end{proposition}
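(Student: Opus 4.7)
The plan is to verify the three standard ingredients that drive any Zangwill/Ostrowski-style convergence argument for a descent method: (i) the iterates lie in a compact set on which $f$ decreases by a quantifiable amount; (ii) successive iterate differences shrink to zero; and (iii) limit points are stationary. From the compact sublevel set hypothesis, the initial level set $\{\bbeta: f(\bbeta)\le f(\bbeta_0)\}$ is compact, and the descent property inherent to Armijo backtracking pins every iterate $\bbeta_m$ inside it. On this fixed compact set $K$, continuity of $\bH(\bbeta)$ together with positive definiteness gives uniform bounds $0<\sigma_{\min}\le \lambda_{\min}[\bH(\bbeta)]\le \lambda_{\max}[\bH(\bbeta)]\le \sigma_{\max}<\infty$, hence a uniform bound on $\|\bH(\bbeta)^{-1}\|$.

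The central step will be to show that the Armijo step sizes $t_{\bbeta_m}$ admit a uniform positive lower bound on $K$. Writing the search direction $\bd_m=-\bH(\bbeta_m)^{-1}\nabla f(\bbeta_m)$, the gradient is continuous on $K$ so $\|\bd_m\|$ is uniformly bounded, and the curvature bounds give $\nabla f(\bbeta_m)^\top \bd_m \le -\sigma_{\max}^{-1}\|\nabla f(\bbeta_m)\|^2$. A standard Taylor argument (using uniform continuity of $\nabla f$ on a slight thickening of $K$) shows there exists $\bar t>0$ such that the Armijo condition $f(\bbeta_m+t\bd_m)\le f(\bbeta_m)+\sigma t\, \nabla f(\bbeta_m)^\top \bd_m$ holds for every $t\in(0,\bar t]$ independently of $m$. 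The backtracking procedure therefore returns $t_{\bbeta_m}\ge \underline t>0$. Combined with the Armijo inequality and the finiteness of $f$ on $K$, telescoping yields $\sum_m \|\nabla f(\bbeta_m)\|^2<\infty$, so $\nabla f(\bbeta_m)\to\bzero$. Continuity of $\nabla f$ then forces every limit point of $\{\bbeta_m\}$ to be stationary.

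For the compact-connectedness claim, the key observation is that $\|\bbeta_{m+1}-\bbeta_m\|=t_{\bbeta_m}\|\bd_m\|\le \sigma_{\min}^{-1}\|\nabla f(\bbeta_m)\|\to 0$. A sequence confined to a compact set with vanishing successive differences has, by a classical result of Ostrowski (see Ostrowski, \emph{Solution of Equations and Systems of Equations}), a compact and connected set of limit points. This is the proposition's second conclusion. Finally, if every stationary point is isolated, then the limit set is a compact connected subset of the discrete set of stationary points; connectedness forces it to be a singleton $\{\bbeta_\infty\}$, and compactness of $K$ together with the fact that every subsequence has a further subsequence converging to $\bbeta_\infty$ promotes subsequential convergence to full convergence $\bbeta_m\to\bbeta_\infty$.

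The main obstacle I anticipate is establishing the uniform positive lower bound $\underline t$ for the Armijo step sizes; this is the one place where genuine analysis (rather than bookkeeping) is required, and it hinges on uniform continuity of $\nabla f$ and of $\bH^{-1}$ on the compact sublevel set so that the quadratic remainder in Taylor's expansion can be controlled independently of $m$. Everything else—descent, telescoping, Ostrowski's connectedness lemma, and the isolated-points corollary—is standard once this uniform step-size bound is in hand.
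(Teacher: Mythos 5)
The paper does not actually prove this proposition; it is quoted from \cite{lange2013optimization} and \cite{xu2017generalized}, so the comparison below is with the standard argument given there, which your plan largely mirrors (descent into a compact sublevel set, stationarity of limit points, Ostrowski's lemma on vanishing successive differences, and the isolated-points corollary).

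There is, however, one genuine gap: your central step, a \emph{uniform} positive lower bound $\underline t$ on the Armijo backtracking steps, does not follow from uniform continuity of $\nabla f$ on (a thickening of) the sublevel set. Tracing your own Taylor estimate, the Armijo condition at $\bbeta_m$ is guaranteed once $\|\nabla f(\bbeta_m + \theta t \bd_m) - \nabla f(\bbeta_m)\| \le (1-\sigma)\tfrac{\sigma_{\min}}{\sigma_{\max}}\|\nabla f(\bbeta_m)\|$, and the tolerance on the right shrinks with $\|\nabla f(\bbeta_m)\|$. Uniform continuity only yields a modulus $\delta(\epsilon)$ with no control on how $\delta$ scales in $\epsilon$; since the admissible displacement is $t\|\bd_m\|$ with $\|\bd_m\|$ itself of order $\|\nabla f(\bbeta_m)\|$, the resulting step threshold is bounded below by a constant only if $\delta(\epsilon)\gtrsim \epsilon$, i.e., only under a Lipschitz (or $C^2$) assumption on $\nabla f$ that the proposition does not make. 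Consequently both the uniform $\underline t$ and the summability $\sum_m\|\nabla f(\bbeta_m)\|^2<\infty$ are unjustified under the stated hypotheses.

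The repair is standard and is what the cited proofs do: argue by contradiction at a limit point. The Armijo inequality and monotone convergence of $f(\bbeta_m)$ give $t_{\bbeta_m}\,|\nabla f(\bbeta_m)^\top\bd_m|\to 0$; along a subsequence converging to a limit point $\bbeta_\star$, either the steps stay bounded away from zero, forcing $\nabla f(\bbeta_\star)^\top\bH(\bbeta_\star)^{-1}\nabla f(\bbeta_\star)=0$, or $t_{\bbeta_m}\to 0$, in which case the rejected trial step $t_{\bbeta_m}/\rho$ violated Armijo, and dividing that violated inequality by $t_{\bbeta_m}/\rho$ and passing to the limit again forces $\nabla f(\bbeta_\star)^\top\bd_\star \ge \sigma\,\nabla f(\bbeta_\star)^\top\bd_\star$, hence $\nabla f(\bbeta_\star)=\bzero$ by positive definiteness of $\bH(\bbeta_\star)$. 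Stationarity of all limit points plus compactness then gives $\nabla f(\bbeta_m)\to\bzero$ for the full sequence, so $\|\bbeta_{m+1}-\bbeta_m\|\le \sigma_{\min}^{-1}\|\nabla f(\bbeta_m)\|\to 0$, and the remainder of your plan (Ostrowski's connectedness lemma and the singleton argument under isolated stationary points) goes through unchanged.
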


Our examples omit backtracking and involve surrogates satisfying the uniform Lipschitz gradient condition
\begin{eqnarray}
\|\nabla g(\bgamma \mid \bbeta_m)-\nabla g(\bdelta \mid \bbeta_m) \| \label{lipschitz_uniform}
& \le & L \|\bgamma-\bdelta\|
\end{eqnarray}
for all $\bgamma$, $\bdelta$, and $\bbeta_m$. Proposition 8 of \cite{MMconverge} then offers the following simpler but more precise result.
\begin{proposition}
Let $f(\bbeta)$ be a coercive differentiable function majorized by a surrogate satisfying condition (\ref{lipschitz_uniform}). If $\bbeta_\infty$ denotes a minimum point of $f(\bbeta)$, then the iterates $\bbeta_m$ delivered by the corresponding MM algorithm satisfy the bound
\begin{eqnarray*}
\sum_{k=0}^m \| \nabla f(\bbeta_k) \|^2 & \le & 2L [f(\bbeta_0) - f(\bbeta_\infty)] . \label{sublinear_bound}
\end{eqnarray*}
It follows that $\lim_{m \to \infty} \| \nabla f(\bbeta_m) \|=0$. Furthermore, when $f(\bbeta)$ is continuously differentiable, any limit point of the sequence $\bbeta_m$ is a stationary point of $f(\bbeta)$.
\end{proposition}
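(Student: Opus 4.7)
The plan is to combine three ingredients already on the table: the tangency identity $\nabla g(\bbeta_m \mid \bbeta_m) = \nabla f(\bbeta_m)$, the uniform Lipschitz gradient condition (\ref{lipschitz_uniform}) applied to $g(\cdot \mid \bbeta_m)$, and the MM update property that $\bbeta_{m+1}$ globally minimizes $g(\cdot \mid \bbeta_m)$. These together will yield a per-iteration decrease of the form $f(\bbeta_{m+1}) \le f(\bbeta_m) - \frac{1}{2L}\|\nabla f(\bbeta_m)\|^2$, after which the summability bound, the gradient limit, and the stationarity of limit points all follow by standard arguments.

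First I would invoke the descent lemma for $g(\cdot \mid \bbeta_m)$: condition (\ref{lipschitz_uniform}) implies the quadratic upper bound
$$g(\bgamma \mid \bbeta_m) \;\le\; g(\bbeta_m \mid \bbeta_m) + \nabla g(\bbeta_m \mid \bbeta_m)^\top(\bgamma - \bbeta_m) + \frac{L}{2}\|\bgamma - \bbeta_m\|^2$$
for every $\bgamma$. Replacing $g(\bbeta_m \mid \bbeta_m)$ by $f(\bbeta_m)$ and $\nabla g(\bbeta_m \mid \bbeta_m)$ by $\nabla f(\bbeta_m)$ via tangency, and then evaluating at the trial point $\bgamma = \bbeta_m - L^{-1}\nabla f(\bbeta_m)$, produces the explicit upper bound $f(\bbeta_m) - \frac{1}{2L}\|\nabla f(\bbeta_m)\|^2$. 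Because $\bbeta_{m+1}$ minimizes the surrogate, $g(\bbeta_{m+1} \mid \bbeta_m)$ is no larger, and the domination inequality $f(\bbeta_{m+1}) \le g(\bbeta_{m+1} \mid \bbeta_m)$ transports the decrease back to $f$.

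Telescoping the one-step decrease over $k = 0, \ldots, m$ and using $f(\bbeta_{m+1}) \ge f(\bbeta_\infty)$ yields the advertised inequality $\sum_{k=0}^m \|\nabla f(\bbeta_k)\|^2 \le 2L[f(\bbeta_0) - f(\bbeta_\infty)]$. Sending $m \to \infty$, the partial sums are bounded above, so the nonnegative series converges and its general term $\|\nabla f(\bbeta_m)\|^2$ must tend to zero. For the final claim, coerciveness together with the monotone decrease $f(\bbeta_m) \le f(\bbeta_0)$ confines the iterates to a compact sublevel set. Any limit point $\bbeta^*$ arises from a subsequence $\bbeta_{m_j} \to \bbeta^*$, and continuity of $\nabla f$ gives $\nabla f(\bbeta^*) = \lim_j \nabla f(\bbeta_{m_j}) = \bzero$, identifying $\bbeta^*$ as stationary.

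The only subtle point is the appeal to a \emph{global} minimizer of the surrogate when we compare $g(\bbeta_{m+1} \mid \bbeta_m)$ to the value at the trial point $\bbeta_m - L^{-1}\nabla f(\bbeta_m)$; for nonconvex surrogates this would require justification, but in all the examples of this paper the surrogates are convex quadratics (or sums involving such), so global minimization of $g(\cdot \mid \bbeta_m)$ is automatic and the argument goes through cleanly. Everything else is bookkeeping on top of the descent lemma.
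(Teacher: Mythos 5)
Your proof is correct and self-contained: the descent lemma applied to the surrogate at the trial point $\bbeta_m - L^{-1}\nabla f(\bbeta_m)$, combined with tangency, domination, and the fact that $\bbeta_{m+1}$ minimizes $g(\cdot \mid \bbeta_m)$, yields the sufficient decrease $f(\bbeta_{m+1}) \le f(\bbeta_m) - \frac{1}{2L}\|\nabla f(\bbeta_m)\|^2$, and the telescoping, gradient-vanishing, and limit-point arguments all go through. The paper itself does not reprove this result but simply cites Proposition 8 of its reference on MM convergence, and your argument is essentially the standard one underlying that citation, so there is nothing to correct.
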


Proposition 12 of \cite{MMconverge} dispenses with the isolated stationary point assumption of Proposition \ref{backtracking_convergence} and proves  convergence when $f(\bbeta)$ is coercive, continuous, and subanalytic and all $g(\bbeta \mid \bbeta_m)$ are continuous, $\mu$-strongly convex, and satisfy condition (\ref{lipschitz_uniform}) on the compact sublevel set $\{\bbeta: f(\bbeta) \le f(\bbeta_0)\}$. Virtually all well-behaved functions are subanalytic. For instance, the Moreau envelope of a convex lower-semicontinuous subanalytic function is subanalytic \cite{bolte2007lojasiewicz}; the function $\|\bbeta\|_0$ is semialgebraic and therefore subanalytic \cite{landeros2022extensions}.

For the convex problems, one can also attack global convergence by invoking monotone operator theory \cite{bauschke2011convex}. 
\begin{proposition}\label{thm:convexconverge}
For a convex problem with $\bH=d^2g(\bbeta \mid \bbeta)$ a fixed positive definite matrix, suppose that the set of stationary points is nonempty. Then the MM iterates 
\begin{eqnarray*}
\bbeta_{m+1} & = & \bbeta_m - \bH^{-1} \nabla f(\bbeta_m)
\end{eqnarray*}
converge to a minimum point. 
\end{proposition}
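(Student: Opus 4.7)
The plan is to recast the iteration as standard gradient descent in a rescaled geometry, then invoke Baillon--Haddad and the convergence theory of averaged operators.

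First I would remove the matrix $\bH$ from the picture. Setting $\bgamma = \bH^{1/2}\bbeta$ and $\tilde f(\bgamma) = f(\bH^{-1/2}\bgamma)$, a direct computation shows that the MM update $\bbeta_{m+1}=\bbeta_m-\bH^{-1}\nabla f(\bbeta_m)$ is equivalent to $\bgamma_{m+1}=\bgamma_m-\nabla \tilde f(\bgamma_m)$, i.e., gradient descent on $\tilde f$ with step size one. The majorization hypothesis, combined with $\bH = d^2 g(\bbeta\mid\bbeta)$ being the (fixed) Hessian of the quadratic surrogate, forces $d^2 f(\bbeta)\preceq \bH$ for all $\bbeta$. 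In the new coordinates this reads $d^2 \tilde f(\bgamma)=\bH^{-1/2}\,d^2 f(\bH^{-1/2}\bgamma)\,\bH^{-1/2}\preceq \bI$, so $\tilde f$ is convex with $1$-Lipschitz gradient.

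Next I would apply the Baillon--Haddad theorem: a convex, differentiable function with $1$-Lipschitz gradient has a firmly nonexpansive gradient, so $\nabla\tilde f$ is firmly nonexpansive on $\Real^p$. Consequently the update operator $T(\bgamma)=\bgamma-\nabla\tilde f(\bgamma)$ is $\tfrac{1}{2}$-averaged (hence nonexpansive) in the Euclidean norm. Its fixed-point set is precisely the set of stationary points of $\tilde f$, which pulls back to the stationary set of $f$ under $\bbeta\mapsto\bH^{1/2}\bbeta$ and is nonempty by hypothesis.

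Finally I would close with the Krasnoselskii--Mann convergence theorem for averaged operators with a nonempty fixed-point set (see \citep{bauschke2011convex}): the Picard iterates $\bgamma_{m+1}=T(\bgamma_m)$ converge to some fixed point $\bgamma_\infty$, and transforming back yields $\bbeta_m\to\bH^{-1/2}\bgamma_\infty$, a stationary point of $f$ and therefore (by convexity) a global minimizer. The main obstacle I anticipate is purely bookkeeping: justifying the change of variables, checking that firm nonexpansiveness of $\nabla\tilde f$ really transfers to $\tfrac{1}{2}$-averagedness of $T$, and invoking the Krasnoselskii--Mann theorem in the form one actually needs; the geometric content (Baillon--Haddad plus averaged-operator convergence) is the essential ingredient and is standard.
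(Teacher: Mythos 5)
Your proof is correct and essentially the paper's own argument: both establish that the update map is $\tfrac{1}{2}$-averaged and then invoke convergence of Picard iterates of an averaged operator with nonempty fixed-point set, the only difference being that the paper proves the underlying cocoercivity (the Baillon--Haddad content) directly in the $\bH$-metric via an auxiliary function, whereas you whiten coordinates with $\bH^{1/2}$ and cite Baillon--Haddad as a black box. One minor bookkeeping point: deriving $d^2 f(\bbeta)\preceq\bH$ assumes $f$ is twice differentiable, which is unnecessary---the majorization $f(\balpha)\le f(\bbeta)+\nabla f(\bbeta)^\top(\balpha-\bbeta)+\tfrac{1}{2}(\balpha-\bbeta)^\top\bH(\balpha-\bbeta)$ already yields the descent (1-Lipschitz-gradient) inequality for $\tilde f$, which together with convexity is all that Baillon--Haddad requires.
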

\begin{proof}
See Appendix \ref{proofs_appendix}.
\end{proof}

Proposition 10 of \cite{MMconverge} proves that MM  iterates converge at a linear rate in the best circumstances. 
\begin{proposition} \label{linear_conv_prop}
Let $f(\bbeta)$ be $\mu$-strongly convex and differentiable, and assume $g(\bbeta \mid \bbeta_m)$  satisfies the Lipschitz condition (\ref{lipschitz_uniform}). If the global minimum occurs at $\balpha$, then the MM iterates $\bbeta_m$ satisfy
\begin{eqnarray*}
f(\bbeta_{m}) - f(\balpha) & \le &
\left[1 - \left(\frac{\mu}{2L}\right)^{2}\right]^{m}
[f(\bbeta_{0}) - f(\balpha)],
\end{eqnarray*}
establishing linear convergence of $\bbeta_m$ to $\balpha$.
\end{proposition}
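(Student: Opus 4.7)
The plan is to combine a per-step descent inequality for $f$ (extracted from the Lipschitz gradient of the surrogate) with the Polyak--Łojasiewicz inequality supplied by $\mu$-strong convexity of $f$ to produce a geometric contraction of the suboptimality gap $f(\bbeta_m) - f(\balpha)$.

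First I would apply the standard descent lemma to $g(\cdot \mid \bbeta_m)$ using the uniform Lipschitz gradient condition \eqref{lipschitz_uniform}. Evaluating the resulting quadratic upper bound at the gradient-step point $\bbeta_m - L^{-1} \nabla g(\bbeta_m \mid \bbeta_m)$ and invoking the fact that $\bbeta_{m+1}$ globally minimizes $g(\cdot \mid \bbeta_m)$ yields
\begin{eqnarray*}
g(\bbeta_{m+1} \mid \bbeta_m) & \le & g(\bbeta_m \mid \bbeta_m) - \frac{1}{2L}\|\nabla g(\bbeta_m \mid \bbeta_m)\|^2.
\end{eqnarray*}
The tangency conditions $g(\bbeta_m \mid \bbeta_m) = f(\bbeta_m)$ and $\nabla g(\bbeta_m \mid \bbeta_m) = \nabla f(\bbeta_m)$, combined with the majorization $f(\bbeta_{m+1}) \le g(\bbeta_{m+1} \mid \bbeta_m)$, then deliver the per-step decrease
\begin{eqnarray*}
f(\bbeta_{m+1}) & \le & f(\bbeta_m) - \frac{1}{2L}\|\nabla f(\bbeta_m)\|^2.
\end{eqnarray*}

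Next I would convert the gradient-squared term into a suboptimality gap. The Polyak--Łojasiewicz inequality, a standard consequence of $\mu$-strong convexity of $f$, asserts $\|\nabla f(\bbeta_m)\|^2 \ge 2\mu[f(\bbeta_m) - f(\balpha)]$. Substituting and subtracting $f(\balpha)$ produces the one-step contraction
\begin{eqnarray*}
f(\bbeta_{m+1}) - f(\balpha) & \le & \Big(1 - \frac{\mu}{L}\Big)[f(\bbeta_m) - f(\balpha)],
\end{eqnarray*}
and a straightforward induction on $m$ delivers the claimed geometric bound --- in fact with the strictly tighter contraction constant $1 - \mu/L$. The stated constant $1 - (\mu/(2L))^2$ then follows \emph{a fortiori} since $\mu \le L$ implies $(\mu/(2L))^2 \le \mu/L$.

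There is no deep obstacle, but the checkpoint deserving care is verifying $\mu \le L$, which legitimizes the bound as a genuine contraction. This follows from tangency and domination: the function $g(\cdot \mid \bbeta_m) - f(\cdot)$ attains a global minimum at $\bbeta_m$, so at any point of twice differentiability one has $d^2 g(\bbeta_m \mid \bbeta_m) \succeq d^2 f(\bbeta_m)$, which forces the Lipschitz constant of $\nabla g$ to dominate the strong convexity constant of $f$. Beyond this sanity check, the argument is essentially mechanical --- the real work is concentrated in the descent lemma and the PL inequality, both of which are textbook.
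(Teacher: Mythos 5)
Your proof is correct, and it reaches the stated bound by a partly different route than the paper. The per-step descent inequality $f(\bbeta_{m+1}) \le f(\bbeta_m) - \frac{1}{2L}\|\nabla f(\bbeta_m)\|^2$ is obtained exactly as in the paper (descent lemma on the surrogate, evaluation at the trial point $\bbeta_m - L^{-1}\nabla f(\bbeta_m)$, tangency and domination). Where you diverge is the Polyak--\L{}ojasiewicz step: you invoke the textbook PL inequality $\|\nabla f(\bbeta)\|^2 \ge 2\mu[f(\bbeta)-f(\balpha)]$, which follows directly from $\mu$-strong convexity, and thereby obtain the one-step contraction factor $1-\frac{\mu}{L}$; the paper instead derives a weaker PL-type bound $\|\nabla f(\bbeta)\|^2 \ge \frac{\mu^2}{2L}[f(\bbeta)-f(\balpha)]$ by combining the quadratic upper bound $f(\bbeta)-f(\balpha) \le \frac{L}{2}\|\bbeta-\balpha\|^2$ (from $L$-smoothness of $g(\cdot \mid \balpha)$ at the minimizer) with the Cauchy--Schwarz estimate $\|\nabla f(\bbeta)\| \ge \frac{\mu}{2}\|\bbeta-\balpha\|$, which reproduces the exact constant $1-\left(\frac{\mu}{2L}\right)^2$ in the statement. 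Your argument is simpler and gives a strictly sharper rate (the classical gradient-descent rate for $L$-smooth, $\mu$-strongly convex objectives), from which the stated bound follows a fortiori since $\left(\frac{\mu}{2L}\right)^2 \le \frac{\mu}{L}$ whenever $\mu \le 4L$. One small repair: your justification of $\mu \le L$ via $d^2 g \succeq d^2 f$ assumes twice differentiability, which is not part of the hypotheses; it is cleaner to combine the descent-lemma upper bound $f(\bbeta) \le f(\bbeta_m) + \nabla f(\bbeta_m)^\top(\bbeta-\bbeta_m) + \frac{L}{2}\|\bbeta-\bbeta_m\|^2$ (via domination and tangency) with the strong convexity lower bound anchored at the same point, which forces $\mu \le L$ with no extra smoothness; this is also how the inequality arises "incidentally" in the paper's proof.
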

\begin{proof}
See Appendix \ref{proofs_appendix} for a slightly corrected proof.
\end{proof}

In many of our examples, the Hessian $d^2g(\bbeta \mid \bbeta)= c \bX^\top\bX$ for some $c>0$. The constants of Proposition \ref{linear_conv_prop} satisfy $L=c \sigma_1^2$ and $\mu = c \sigma_p^2$, where the $\sigma_i$ are the ordered singular values of $\bX$. The ratio $\frac{L}{\mu}$ is the condition number of $\bX$ relevant to the rate of convergence. Note that $c$ disappears in the condition number. In the proximal distance method, we encounter the Hessian $d^2g(\bbeta \mid \bbeta)= c \bX^\top\bX+\lambda \bI$, where the constant $c$ no longer cancels in the condition number. Although addition of $\lambda \bI$ improves the condition number pertinent to inner iterations, annealing of $\lambda$ towards $\infty$ in outer iterations slows overall convergence to the constrained minimum. Multinomial regression replaces the Gram matrix $\bX^\top\bX$ by the Kronecker product $ \left(\bI - \frac{1}{c}\bone \bone^\top\right)\otimes \left(\bX^\top\bX\right)$, whose singular values are products of the eigenvalues of the matrices $\bI - \frac{1}{c}\bone \bone^\top$  and $\bX^\top\bX$. Because these are $\frac{1}{c}$ and $1$ and the $\sigma_i^2$, respectively, the revised condition number turns out to equal $\frac{cL}{\mu}$.

\section{Numerical Experiments}\label{exp}

The numerical examples in this section demonstrate the versatility of the MM principle in deriving fast estimation algorithms. We consider quantile regression (ordinary and sparse), $\text{L}_2\text{E}$ regression (ordinary and isotonic), and multinomial regression (ordinary and reduced-rank). For sparse and low-rank regression, the penalties ignore intercept parameters. All computations were performed on a Mac with an M4 chip
and 32GB of RAM. We primarily employ restarted Nesterov acceleration to speed up convergence of our MM algorithms. For problems involving a smoothing parameter $\mu$, we also invoke the annealing strategy for $\mu$ described in Section \ref{ablation}. 
\subsection{Unpenalized Models}\label{sec:nonstructured}

We now compare the performance of our MM algorithms and popular competing algorithms for smoothed quantile regression, $\text{L}_2\text{E}$ regression, and multinomial regression. No constraints or penalties are imposed at this stage. We adopt a common protocol for generating the design matrix $\bX\in \Real^{n\times (p-1)}$. The aim here is to demonstrate the computational efficiency of the MM algorithms on large-scale data. Each row of $\bX$ is sampled from a multivariate normal distribution with mean $\bzero$ and covariance matrix $\bSigma$ with  entries $\sigma_{ij}=0.7^{|i-j|}$. An intercept column $\bone_n$ is then left-appended to $\bX$ to obtain a full design matrix $\tilde{\bX}\in \Real^{n\times p}$. Generally, we set the true coefficient vector to  $\bbeta^*=(1,0.1,0.1,\dots,0.1)^\top \in \Real^p$. The responses $\by$ require a different generation protocol for each problem class. Unless stated otherwise, the number of observations equals $n=100p$, where the number of predictors $p$ ranges over the set $\{100,200,\dots,1000\}$. We terminate our MM algorithms when the relative change in objective falls below $10^{-6}$. In the following paragraphs, we describe competing methods, their software implementations, and the further protocols for generating the response vector $\by$. 

\textbf{Quantile Regression} For convolution-smoothed quantile regression, we compare MM  to the \texttt{conquer} R package \cite{tan2022high,he2023smoothed}. The \texttt{conquer} package implements gradient descent with a Barzilai-Borwein step size. In calling \texttt{conquer}, we invoke the uniform kernel with the default bandwidth $\mu=\max\{[(\log(n)+p)/n]^{0.4},0.05\}$. Our MM software shares this kernel and bandwidth and therefore optimizes the same objective. The response $y_i$ is generated as 
\begin{eqnarray}\label{eq:quantiley}
y_i & = & \tilde{\bx}_i^\top \bbeta^* + \Big(\frac{x_{ip}}{2} +1 \Big)[\epsilon_i - F^{-1}_{\epsilon_i}(q)],
\end{eqnarray}
where $\epsilon_i$ follows a $t$-distribution with 1.5 degrees of freedom and $F$ is the distribution function of the $t$-distribution. The term $- F^{-1}_{\epsilon_i}(q)$ creates quantile drift, which is a common practice in simulation studies of quantile regression. The multiplier $\frac{x_{ip}}{2} +1$ encourages  heteroskedasticity. We consider two quantile levels, $q=0.5$ and $q=0.8$, in our experiments. 

\begin{figure*}[htbp]
  \centering
  \includegraphics[width=\linewidth, height = 0.5\linewidth]{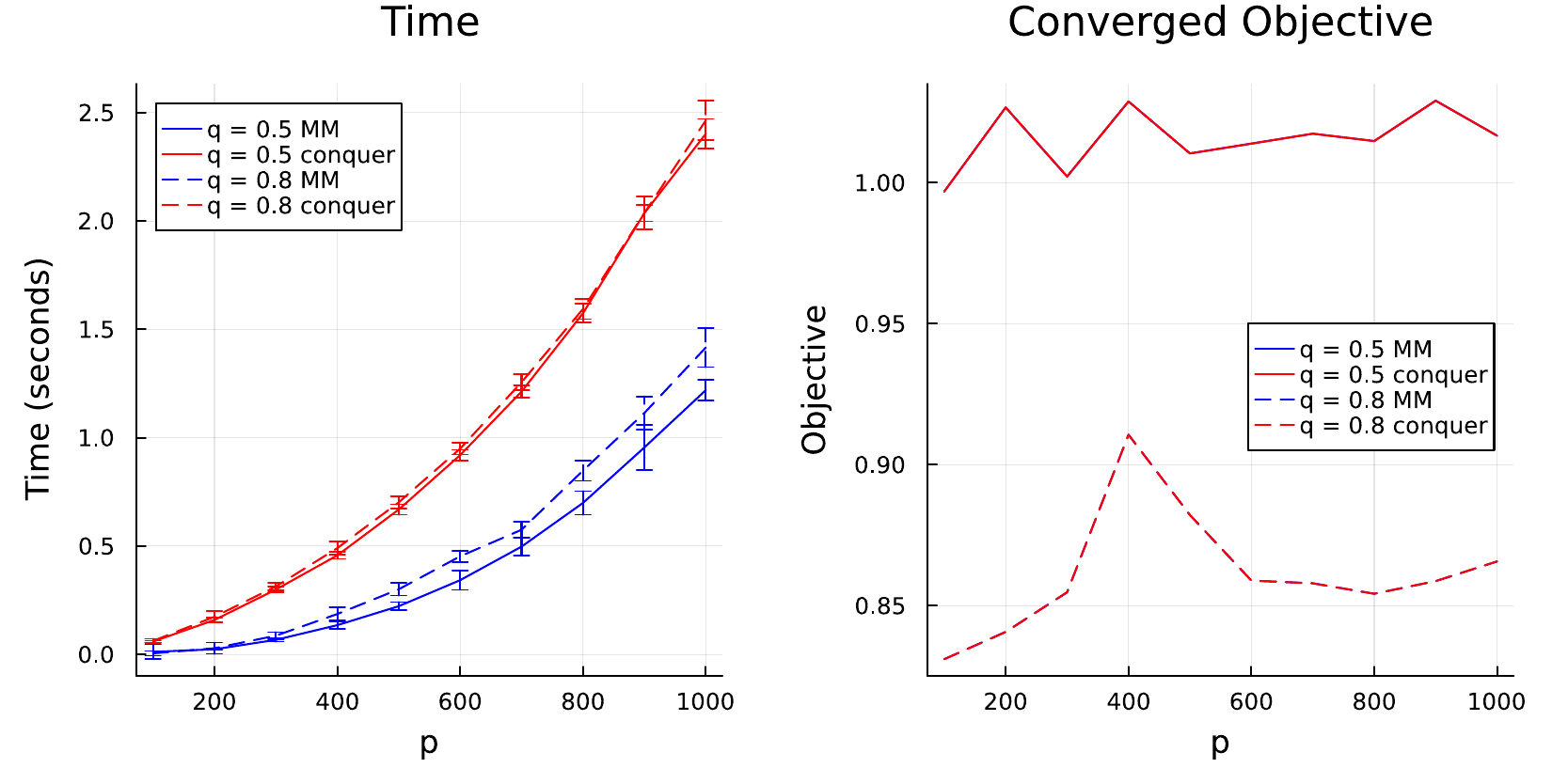}
  \caption{Results for non-sparse quantile regression. The left panel shows average computation time, with the error bars marking $\pm 1$ standard deviation. In the right panel, the objective trajectories of MM and \texttt{conquer} overlap.  }
  \label{quantile}
\end{figure*}

\textbf{$\text{L}_2\text{E}$ Regression} 
For $\text{L}_2\text{E}$ regression, we compare the proposed double majorization technique described in Section \ref{Sectionl2e} to iteratively reweighted least squares as proposed by \cite{liu2023sharper}.  Their implementation in R executes poorly on large-scale data. We re-implemented their IRLS approach in Julia and used it as our baseline competition. Responses are generated as 
\begin{eqnarray*}
y_i & = & \tilde{\bx}_i^\top \bbeta^* + \epsilon_i, \quad i = 1,2,\dots,n,
\end{eqnarray*}
where the $\epsilon_i$ are standard normal random deviates.   To induce outlier contamination, we add 10 to $y_i$ for the first 10\% of the samples and 10 to the second column of $\tilde{\bX}$ for the last 10\% of samples. This simulation choice demonstrates that the $\text{L}_2\text{E}$ estimator is also robust to contamination in the covariates $\bX$, a property not enjoyed by Huber regression or quantile regression.

\begin{figure*}[htbp]
  \centering
  \includegraphics[width=\linewidth, height = 0.5\linewidth]{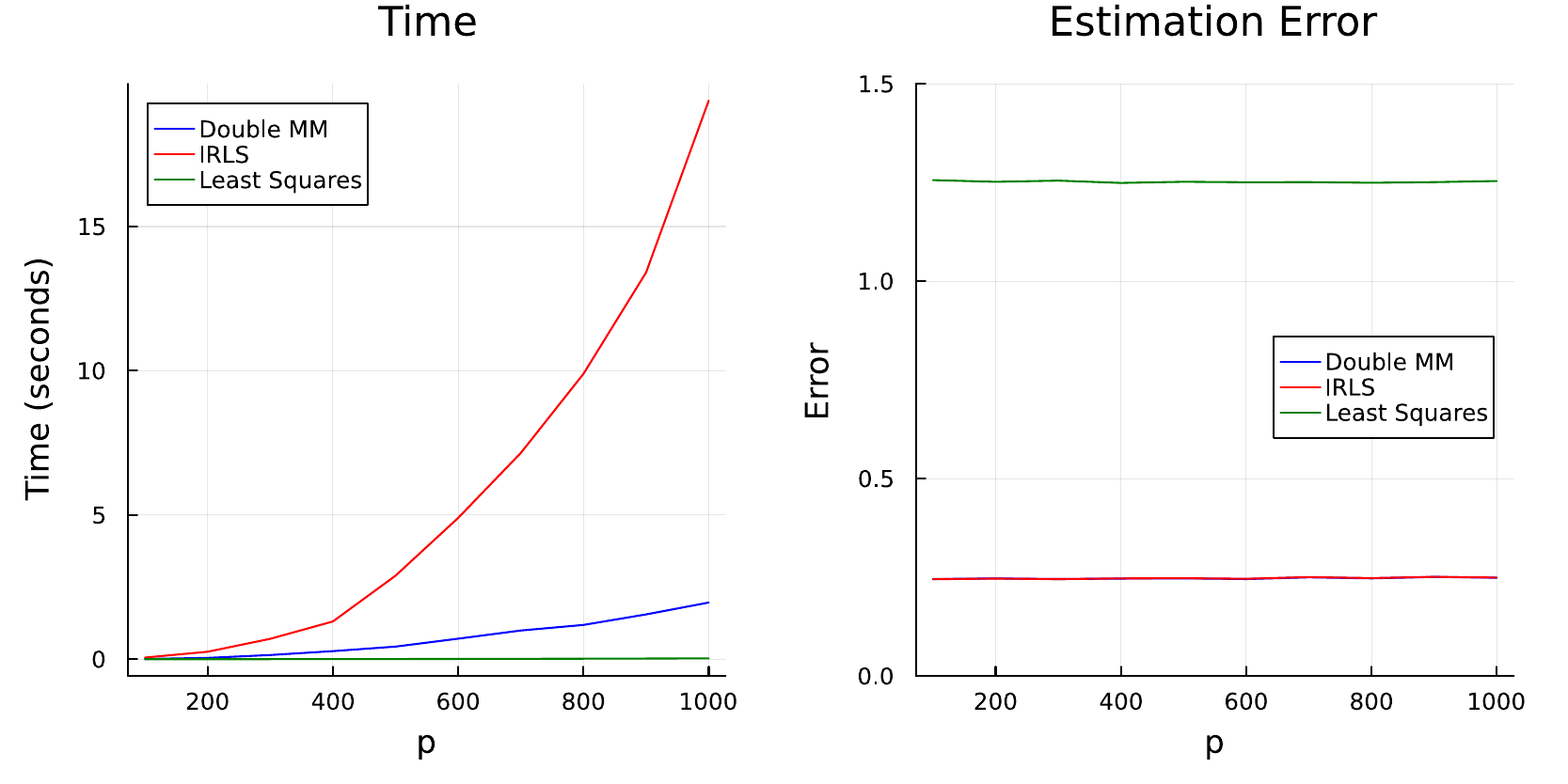}
  \caption{Results for $\text{L}_2\text{E}$ regression. IRLS and Least Squares refer to the weighted least squares approach of \cite{liu2023sharper} and naive least squares regression, respectively. In the right panel, the estimation error trajectories of Double MM and IRLS overlap.  }
  \label{l2e}
\end{figure*}

\textbf{Multinomial Regression} 
For multinomial (logistic) regression, we compare our MM algorithm to the \texttt{glmnet} R package \cite{friedman2010regularization} with penalty parameter 0. The responses $Y_i \in \Real^{c}$ are generated as
\begin{eqnarray*}
Y_i &\sim &\text{Multinomial}[1, (p_{i1}, p_{i2}, \dots, p_{ic})], \\ p_{ij} & = & \frac{\exp(\tilde{\bx}_i^\top \bbeta^*_j)}{\sum_{l=1}^c \exp(\tilde{\bx}_i^\top \bbeta^*_l)},
\end{eqnarray*}
where $\bbeta^*_j$ is the $j$-th column of the coefficient matrix $\bB^*\in \Real^{p\times c}$. The entries of $\bB^*$ are populated with random $\text{Uniform}[0,1]$ values.  In this simulation, there are $c=5$ categories. In contrast to the two previous  examples, $p$ ranges over $\{30,60,\dots,300\}$ and $n=1000p$. 

\begin{figure*}[htbp]
  \centering
  \includegraphics[width=\linewidth, height = 0.5\linewidth]{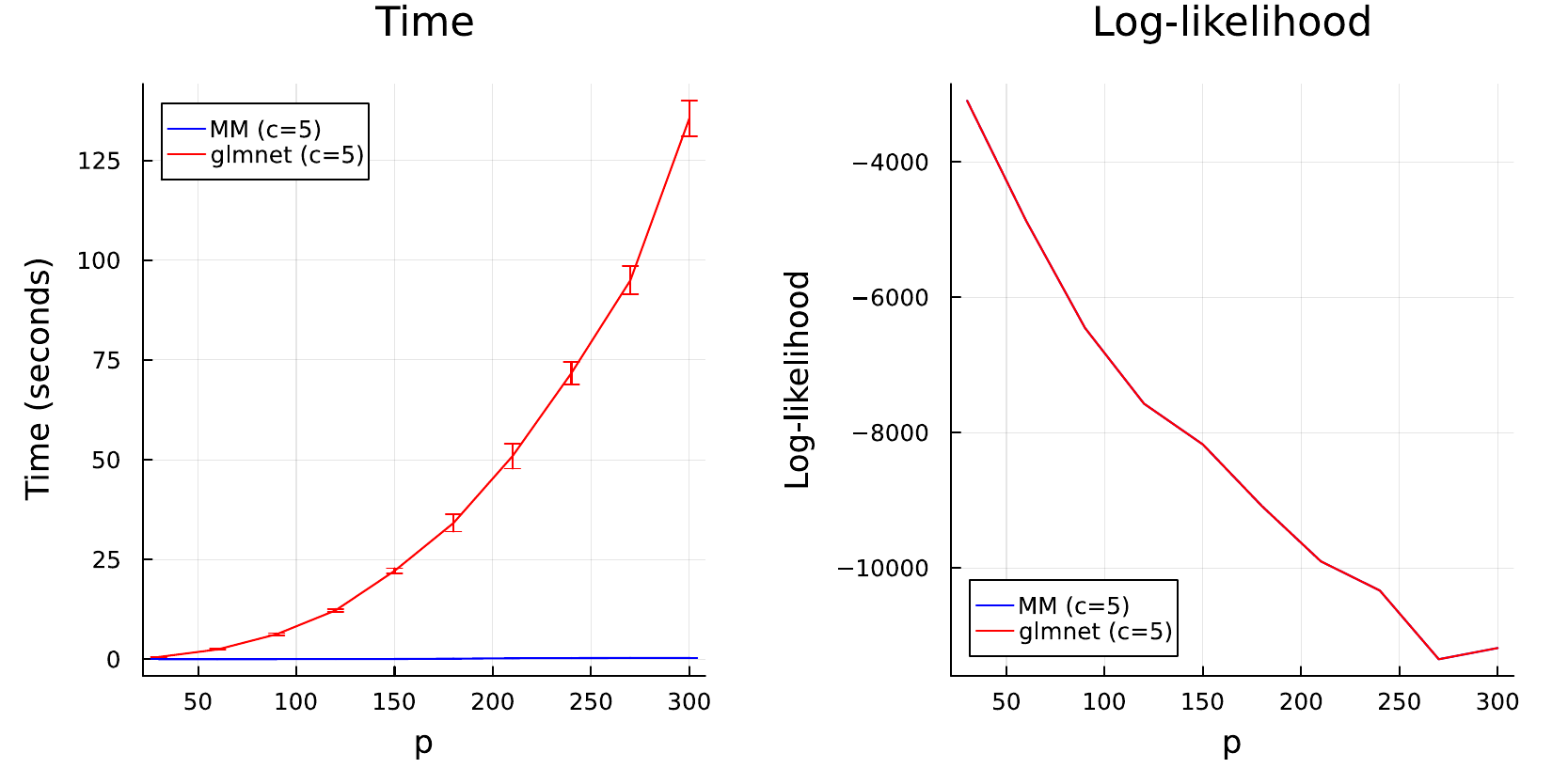}
  \caption{Results for multinomial regression. In the right panel, the log-likelihood trajectories of MM and \texttt{glmnet} overlap.}
  \label{multinomial}
\end{figure*}

Figures \ref{quantile}, \ref{l2e},  and \ref{multinomial} summarize our computational experiments. All experimental results are averaged over 50  random replicates. Figure \ref{quantile} shows that MM converges to the same objective as \texttt{conquer}, certifying the correctness of our implementation. However, as demonstrated in the left panel of Figure \ref{quantile}, MM can be 2 to 3 times faster than  \texttt{conquer}.  In $\text{L}_2\text{E}$ regression, Figure \ref{l2e} demonstrates that double majorization is over 10 times faster than weighted least squares. The right panel of  Figure \ref{l2e} shows that the estimation error $\| \hat{\bbeta} - \bbeta^* \|_2$ of the $\text{L}_2\text{E}$ estimator is much smaller than that of least squares, confirming the robustness of $\text{L}_2\text{E}$. Figure \ref{multinomial} shows that MM converges to the same log-likelihood as \texttt{glmnet}, but can be over 400 times faster for unpenalized multinomial regression. This speedup is attributable to the update rule \eqref{multinomialupdate}, which yields highly efficient updates by reducing the dimension of the normal equations. We note that unpenalized multinomial regression can be useful in constructing likelihood-ratio tests. Taken together, these results highlight MM's ability to compute common statistical estimators with high efficiency and no compromise in accuracy.

\subsection{Sparse Quantile Regression \label{models}}

We next compare sparse quantile regression with  $\ell_0$-norm and distance-to-set penalties to quantile regression with Lasso, SCAD, and MCP penalties. R package \texttt{conquer} supplies the Lasso, SCAD, and MCP results. In our comparisons, the design matrix $\tilde{\bX}$ is generated as described in Section \ref{sec:nonstructured}. However, following \cite{tan2022high} and \cite{man2024unified}, the true coefficient vector $\bbeta^*$ has $\bbeta^*_1 = 4$, $\bbeta^*_3 = 1.8$, $\bbeta^*_5=1.6$, $\bbeta^*_7=1.4$, $\bbeta^*_9=1.2$, $\bbeta^*_{11}=1$, $\bbeta^*_{13}= -1$, $\bbeta^*_{15}= -1.2$, $\bbeta^*_{17}= -1.4$, $\bbeta^*_{19}= -1.6$, and $\bbeta^*_{21}= -1.8$. All remaining elements of $\bbeta^*$ are 0. In other words, there are 10 nonzero predictors besides the intercept. The responses are generated by the protocol \eqref{eq:quantiley} except that we now consider $\mathcal{N}(0,2)$ noise in addition to the heavy-tailed $t_{1.5}$ noise. Our experiments cover both the over-determined case $(n=500,p=250)$ and the under-determined case $(n=250,p=500)$. Quantile levels are set at $q=0.5$ and $q=0.7$. We set the Moreau envelope parameter $\mu$ for the quantile loss to be the same as the bandwidth parameter $h$ returned by \texttt{conquer}, namely $\mu = \max\{0.05, \sqrt{q(1-q)} * (\log(p) / n)^{0.25}\}$.

The following metrics measure model performance: (a) true positive rate (TPR), (b) false positive rate (FPR), (c) estimation error (EE), defined as $\| \hat{\bbeta} - \bbeta^*\|_2$, and (d) prediction error (PE), defined as $\| \tilde{\bX}\hat{\bbeta} - \tilde{\bX}\bbeta^*\|_2$. Penalty parameters for all models are selected via 5-fold cross-validation, with quantile loss serving as the validation error. For the $\ell_0$-norm Moreau envelope penalty with penalty constant $\lambda$, we set $\alpha=0.01$ and search over an exponentially spaced grid of 30 different $\lambda$ points.  For the proximal distance model, we search over the grid $k \in \{1,2,\dots,30\}$ to identify the optimal sparsity level. We terminate the MM algorithms when the gradient norm of our loss function falls below $10^{-3}$.  While the optimal solution $\hat{\bbeta}$ is not sparse in the $\ell_0$-norm model, its proximal projection $\prox_{\alpha \|\cdot\|_0}(\hat{\bbeta})$ is sparse, and we use this projection as our final sparse estimator.
\begin{table*}[htbp]
\centering
\begin{tabular}{cccccc} 
\toprule
Method       & TPR           & FPR           & EE            & PE            & Time (s)  \\ 
\midrule
\multicolumn{6}{c}{$(n=500,p=250), q=0.5$}                                            \\
SQR-Lasso    & 1.00 (0.00)   & 0.09 (0.03)   & 0.46 (0.10)   & 8.07 (1.27)   & 2.39      \\
SQR-SCAD     & 0.96 (0.11)   & 0.02 (0.11)   & 2.42 (13.29)  & 30.5 (161.2)  & 2.96      \\
SQR-MCP      & 0.95 (0.11)   & 0.02 (0.12)   & 2.61 (13.58)  & 32.7 (166.2)  & 3.00      \\
SQR-$\ell_0$ & 1.00 (0.00)   & 0.00 (0.00)   & 0.24 (0.06)   & 4.36 (0.87)   & \bf{0.59}      \\
SQR-PD       & 1.00 (0.00)   & 0.00 (0.00)   & \bf{0.20 (0.06)}   & \bf{3.71 (1.12)}   & 9.90      \\ 
\midrule
\multicolumn{6}{c}{$(n=500,p=250), q=0.7$}                                            \\
SQR-Lasso    & 1.00 (0.00)   & 0.09 (0.03)   & 0.57 (0.13)   & 9.92 (2.14)   & 2.66      \\
SQR-SCAD     & 0.90 (0.13)   & 0.02 (0.12)   & 4.97 (26.93)  & 64.9 (348.5)  & 3.27      \\
SQR-MCP      & 0.91 (0.13)   & 0.02 (0.13)   & 4.99 (27.69)  & 65.3 (358.2)  & 3.36      \\
SQR-$\ell_0$ & 1.00 (0.03)   & 0.00 (0.00)   & 0.30 (0.18)   & 5.39 (2.70)   & \bf{0.68}      \\
SQR-PD       & 1.00 (0.01)   & 0.00 (0.00)   & \bf{0.29 (0.20)}   & \bf{5.36 (3.00)}   & 11.27     \\ 
\midrule
\multicolumn{6}{c}{$(n=250,p=500), q=0.5$}                                            \\
SQR-Lasso    & 1.00 (0.00)   & 0.06 (0.03)   & 0.86 (0.19)   & 10.7 (2.18)   & 2.89      \\
SQR-SCAD     & 0.73 (0.17)   & 0.10 (0.27)   & 4.33 (6.24)   & 52.1 (84.5)   & 3.55      \\
SQR-MCP      & 0.77 (0.14)   & 0.13 (0.30)   & 4.63 (6.47)   & 55.9 (87.7)   & 3.57      \\
SQR-$\ell_0$ & 0.99 (0.05)   & 0.00 (0.00)   & \bf{0.63 (0.38)}   & \bf{7.46 (3.71)}   & \bf{0.84}      \\
SQR-PD       & 0.99 (0.05)   & 0.01 (0.01)   & 0.66 (0.45)   & 8.50 (4.44)   & 33.03     \\ 
\midrule
\multicolumn{6}{c}{$(n=250,p=500), q=0.7$}                                            \\
SQR-Lasso    & 1.00 (0.01)   & 0.06 (0.02)   & 1.07 (0.20)   & 13.6 (2.51)   & 3.00      \\
SQR-SCAD     & 0.71 (0.15)   & 0.10 (0.27)   & 4.55 (6.06)   & 55.1 (82.8)   & 3.72      \\
SQR-MCP      & 0.77 (0.15)   & 0.18 (0.35)   & 4.96 (5.63)   & 60.9 (76.8)   & 3.71      \\
SQR-$\ell_0$ & 0.94 (0.09)   & 0.00 (0.00)   & \bf{0.97 (0.50)}   & \bf{11.2 (4.91)}   & \bf{0.89}      \\
SQR-PD       & 0.94 (0.14)   & 0.01 (0.01)   & 1.06 (0.98)   & 12.7 (9.22)   & 32.61     \\ 
\bottomrule
\end{tabular}
\caption{Simulation results for sparse quantile regression under heavy-tailed $t_{1.5}$ noise.}
\label{tab:sqrt}
\end{table*}

Table \ref{tab:sqrt} reports simulation results averaged over 50 random replicates under heavy-tailed noise. In general, Lasso-penalized quantile regression tends to produce an excess of false positives. SCAD and MCP penalties significantly reduce the number of false positives, but often at the expense of missing some true positives. In the under-determined case $(n=250, p=500)$ with heavy-tailed noise, SCAD and MCP perform poorly, missing many true positives while including many false positives. SCAD and MCP do perform better when the noise is normal, as the results tabulated  in Appendix \ref{sqrexperimentmore} indicate. The $\ell_0$-norm and distance-to-set penalties consistently achieve high TPR, low FPR, and low prediction error. The $\ell_0$-norm penalty delivers both high precision and efficient computation. Given its dense grid search and annealing of $\lambda$, distance-to-set estimation is generally slower than $\ell_0$-norm estimation.

\subsection{Robust Isotonic Regression}

To illustrate the application of robust isotonic regression, we consider global warming. The data set  \href{https://www.ncei.noaa.gov/access/monitoring/climate-at-a-glance/global/time-series}{climate at a glance}  records December land and ocean temperature anomalies from 1850 to 2023 relative to the 1901-2000 mean. This data set is similar to one analyzed by \cite{wu2001isotonic} and \cite{tibshirani2011nearly}. Appendix \ref{robust_isotonic_appendix} records the implementation details of our 
$\text{L}_2\text{E}$ method. The method not only enables robust estimation, but also quantifies the outlyingness of each observation through a converged case weight $w_i$.  We compare the method, which recycles matrix decompositions, to the weighted least squares method of \cite{liu2023sharper}.  As a non-robust benchmark, we include the non-robust isotonic fit delivered by the pool adjacent violators algorithm \cite{barlow1972isotonic}. 

\begin{figure*}[htbp]
  \centering
  \includegraphics[width=\linewidth, height = 0.45\linewidth]{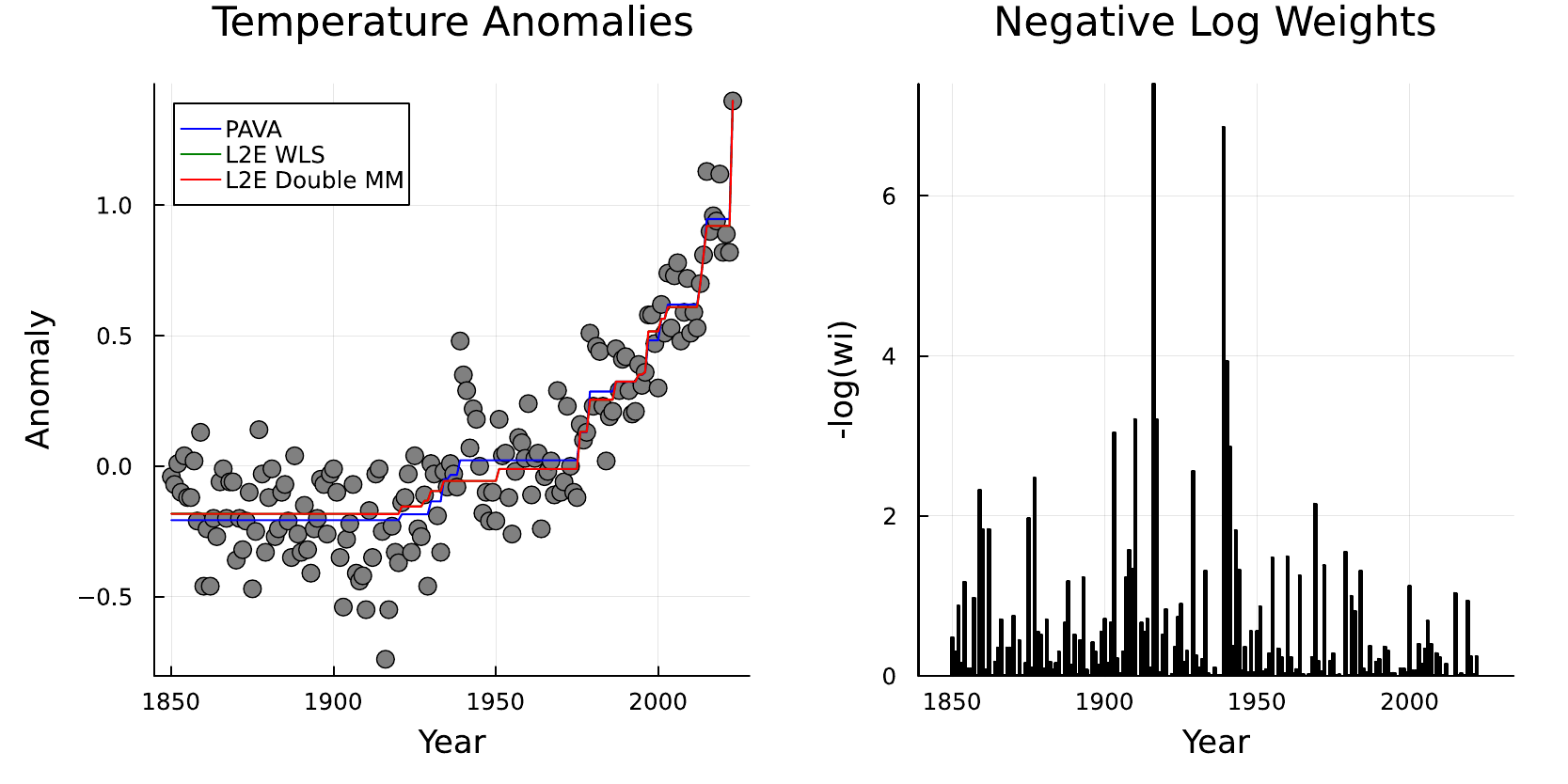}
  \caption{Results for robust isotonic regression. In the left panel, the green and red lines of the robust fits overlap, rendering the green line invisible.}
  \label{fig:L2EIsotonic}
\end{figure*}

The two robust methods give almost identical fits, as depicted in the left panel of Figure \ref{fig:L2EIsotonic}. However, our method takes 0.17 seconds, while IRLS takes 1.84 seconds. These results echo the findings in Section \ref{sec:nonstructured} about efficient computation. Compared to the non-robust fit, our robust isotonic estimate appears to be less influenced by the unusually low temperatures near 1920 and the unusually high ones near 1940. The right panel of Figure \ref{fig:L2EIsotonic} depicts the values  $-\log(w_i)$ quantifying the outlyingness of each observation.
The two spikes near 1920 and 1940 are clearly visible. 

\subsection{Low-Rank Multinomial Regression}

Simulated data from smoothed-nuclear-norm penalized multinomial regression allow us to compare with the R package \texttt{npmr} \cite{powers2018nuclear}, which is powered by accelerated proximal gradient descent. Section \ref{quad_bound_section} described the details of our multinomial model under the smoothed-nuclear-norm penalty. The true coefficient matrix $\bB^*$ is populated with $\operatorname{Uniform}[0,3]$ entries and then projected to the closest rank-3 subspace, with the intercept row excluded from projection to avoid penalizing intercepts. The design matrix and responses are generated according to the protocols of Section~\ref{sec:nonstructured}. We consider two settings, $(n,p,c)=(1000,100,10)$ and $(n,p,c)=(5000,250,20)$, and carry out a grid search on the penalty constant $\lambda$ over 50 points decreasing exponentially from $1$ to $10^{-4}$. For each of 20 replicates, we generate three independent data sets of equal size: a training set to fit the model, a validation set to select the optimal $\lambda$, and a test set to evaluate out-of-sample performance via test log-likelihood. We set the Moreau smoothing constant to $\mu=0.1$ and terminate the MM algorithms when $\|\nabla f(\bB)\| < 10^{-3}$.

Table~\ref{tab:lowrank} summarizes the test log-likelihood (using optimal $\lambda$ tuned on the validation set) and computation time for the full solution path. Both methods achieve comparable test log-likelihoods, although a small discrepancy exists between our MM algorithm and \texttt{npmr}, likely because \texttt{npmr} adopts a slightly different multinomial parameterization with $c$ columns in the coefficient matrix instead of $c-1$. Our smoothed-nuclear-norm MM algorithm is approximately 25 times faster in the first setting and over 10 times faster in the second setting. Several factors contribute to this speed advantage. First, although proximal gradient descent benefits from backtracking line search, the choice of initial step size can substantially affect convergence speed and typically requires careful tuning; our experiments simply adopt the default of \texttt{npmr}. In contrast, the MM algorithm requires no step size specification, as majorization automatically guarantees monotone descent. Second, our MM algorithm incorporates curvature information through the quadratic upper bound majorization, yielding much faster convergence. Finally, our implementation avoids repeated matrix decompositions by reducing the normal equations to a Sylvester equation, a key tactic for reducing computational complexity.
\begin{table}
\centering
\caption{Comparison of MM and \texttt{npmr} on simulated data.}
\label{tab:lowrank}
\begin{tabular}{lcc}
\toprule
Method & Test log-likelihood & Time (s) \\
\midrule
\multicolumn{3}{l}{\textit{$n=1000, p=100, c=10$}} \\
\midrule
MM ($\mu=0.1$) & $-1019.8 \pm 104.5$ & $0.77 \pm 0.07$ \\
\texttt{npmr}                    & $-1001.8 \pm 103.8$ & $18.67 \pm 7.59$ \\
\midrule
\multicolumn{3}{l}{\textit{$n=5000, p=250, c=20$}} \\
\midrule
MM ($\mu=0.1$) & $-5381.9 \pm 639.2$ & $13.81 \pm 2.01$ \\
\texttt{npmr}                    & $-5315.9 \pm 565.3$ & $147.56 \pm 63.48$ \\
\bottomrule
\end{tabular}
\end{table}

We also applied MM-based low-rank multinomial regression to the \texttt{Vowel} data set available in the \texttt{npmr} package. The 990 observations on 10 independent predictors record speaker-independent recognition of the 11 steady-state vowels of British English. After adding an intercept, $(p, c) = (11, 11)$. The 10 predictors are log-area ratios under linear predictive coding. The data are divided into a training set of 528 observations and a test set of 462 observations. We compute the full solution path on the training set with $\lambda$ ranging over 50 exponentially spaced points between $10^{-4}$ and $1$. The test set is randomly split into a validation half and a test half: the validation half selects the optimal $\lambda$, and the test half evaluates out-of-sample performance. Results are averaged over 20 random splits. We report the mean test log-likelihood and total computation time for MM-based low-rank multinomial regression ($\mu = 0.1$), \texttt{npmr} low-rank multinomial regression, and ordinary multinomial regression without low-rank regularization. 
\begin{table}
\centering
\caption{Results on the Vowel Data.}
\label{tab:vowel}
\begin{tabular}{ccc}
\toprule
Method & Test log-likelihood & Time (s) \\
\midrule
MM ($\mu=0.1$) & $-293.5 \pm 7.4$  & $0.21 \pm 0.01$ \\
\texttt{npmr}                    & $-292.9 \pm 7.9$  & $6.66 \pm 0.11$ \\
Multinomial           & $-593.1 \pm 36.4$ & $0.03 \pm 0.08$ \\
\bottomrule
\end{tabular}
\end{table}
Table \ref{tab:vowel} summarizes the experimental results. The table shows that regularized multinomial regression yields much higher test log-likelihoods than ordinary multinomial regression, demonstrating the benefits of low-rank regularization. Despite its reliance on nuclear norm smoothing, our MM algorithm also yields test log-likelihoods similar to those of \texttt{npmr}. Our code executes over 30 times faster than \texttt{npmr}, reinforcing our previous claims about computational efficiency.

\subsection{A Comparison of Acceleration Strategies}\label{ablation}
\begin{figure*}[tbp]
  \centering
  \includegraphics[width=\linewidth, height =0.5\linewidth]{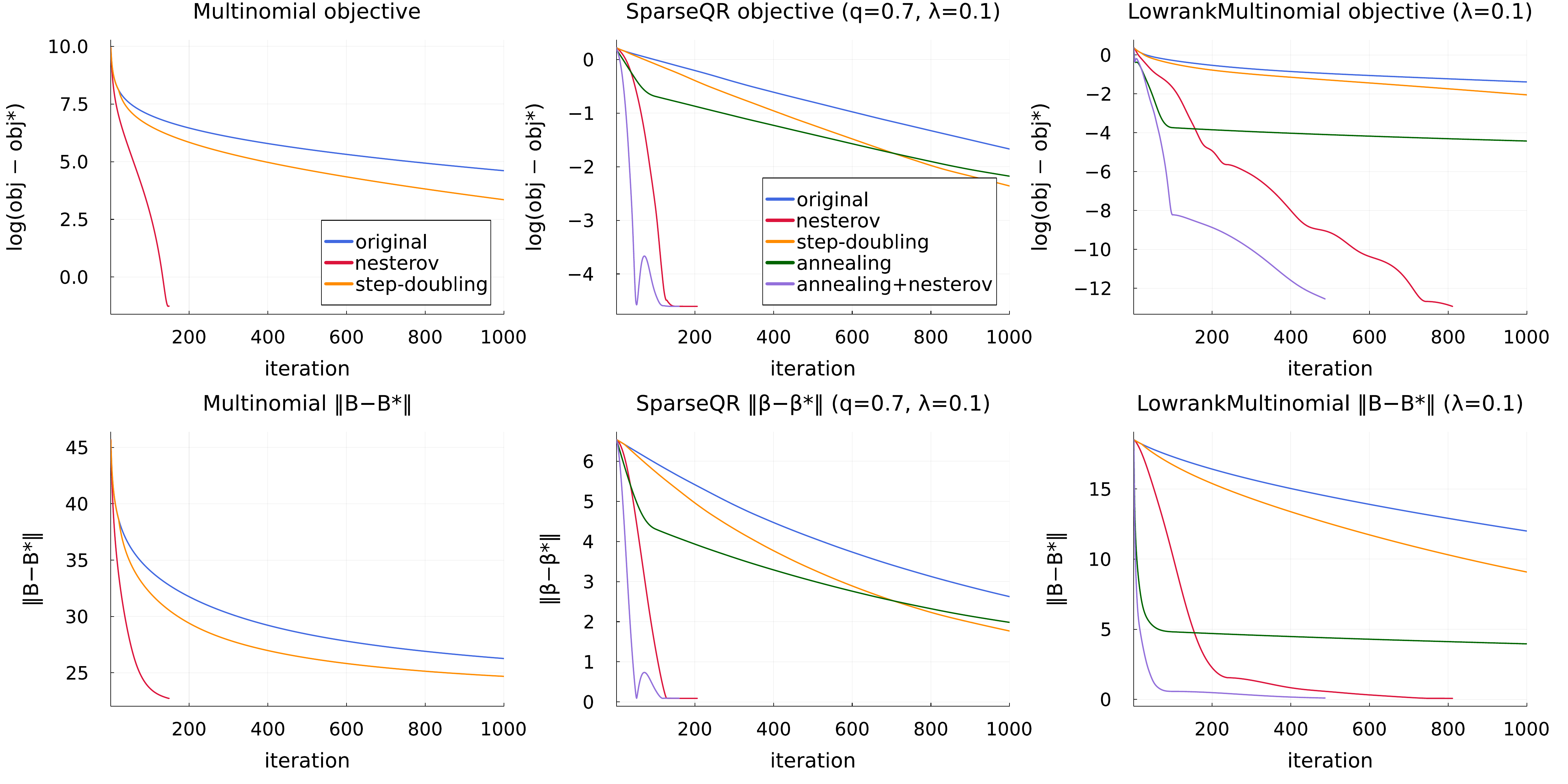}
  \caption{Visualization of the effects of different acceleration strategies on multinomial regression, sparse quantile regression (using smoothed $\ell_0$ norm), and low-rank multinomial regression. $\text{obj}^*$, $\bbeta^*$ and $\bB^*$ are obtained by running the algorithm with Nesterov acceleration for 10000 iterations. }
  \label{fig:comparison}
\end{figure*}
In this subsection, we conduct experiments to evaluate the effects of restarted Nesterov acceleration, step-doubling, and smoothing-parameter annealing. We consider three problems: unpenalized multinomial regression $(n = 10000,\; p = 100)$, sparse quantile regression under the smoothed $\ell_0$ penalty $(n = 500,\; p = 1000)$, and low-rank multinomial regression $(n = 5000,\; p = 250,\; c = 20)$. The data generation protocol is the same as previously described. We terminate the MM algorithms when $\|\nabla f(\bbeta)\| < 10^{-3}$.

For multinomial regression, we compare three variants: no acceleration, step-doubling, and Nesterov acceleration with restart. In step-doubling, we skip the first 20 iterations to avoid being stuck at large function values.  For the latter two problems, we also consider annealing $\mu$ from a large initial value. Specifically, $\mu$ decays exponentially from $\mu_{\max} = 10.0$ to $\mu_{\min} = 0.01$ over 100 iterations. We also consider the combination of annealing with Nesterov acceleration. Figure~\ref{fig:comparison} aids visualization of two types of trajectories: the objective value (evaluated at $\mu = \mu_{\min}$), and the distance to the final converged solution ($\bbeta^*$ or $\bB^*$), obtained by running the algorithm for 10,000 iterations with Nesterov acceleration.

In general, restarted Nesterov acceleration yields a more substantial speedup than step-doubling or annealing. The annealing strategy encourages large steps in early iterations, quickly bringing the iterates into the vicinity of the optimal solution. A combination of annealing and Nesterov acceleration appears to be the most effective strategy when a small smoothing parameter is desired. This synergy leverages both the rapid initial progress of annealing and the sustained impact of restarted Nesterov acceleration.

\subsection{Effect of the Smoothing Parameter on Estimation Error}\label{impact}
The Moreau envelope provides a smooth approximation to the original loss and penalty functions. A larger $\mu$ yields a coarser approximation but reduces computational difficulty. We now investigate how the smoothing parameter $\mu$ affects the estimation error of our estimators.

We consider four settings: Huber regression with varying $\mu$ $(n = 10000,\; p = 100)$; sparse quantile regression under the smoothed $\ell_0$ penalty $(n = 250,\; p = 500,\; q = 0.7)$ with either varying $\mu$ or varying $\alpha$; and low-rank multinomial regression $(n = 2000,\; p = 50,\; c = 5)$ with varying $\mu$. The smoothing parameter ranges over 30 exponentially spaced points between $10^2$ and $10^{-3}$. When we vary $\mu$ for sparse quantile regression, we fix $\alpha$ at $0.01$. Likewise, when we vary $\alpha$ for sparse quantile regression, we fix $\mu$ at $0.25$. The data generation protocol remains as previously described. For sparse quantile regression and low-rank multinomial regression, we generate an independent validation set and select the optimal $\lambda$ from a grid of 30 points decreasing exponentially from $10$ to $10^{-4}$, using quantile loss or negative log-likelihood on the validation set. We report $\|\hat{\bbeta} - \bbeta_{\text{true}}\|$ or $\|\hat{\bB} - \bB_{\text{true}}\|$ as the final estimation error, where $\bbeta_{\text{true}}$ and $\bB_{\text{true}}$ are the true coefficients used to generate the data.

\begin{figure*}[tbp]
  \centering
  \includegraphics[width=\linewidth, height =0.5\linewidth]{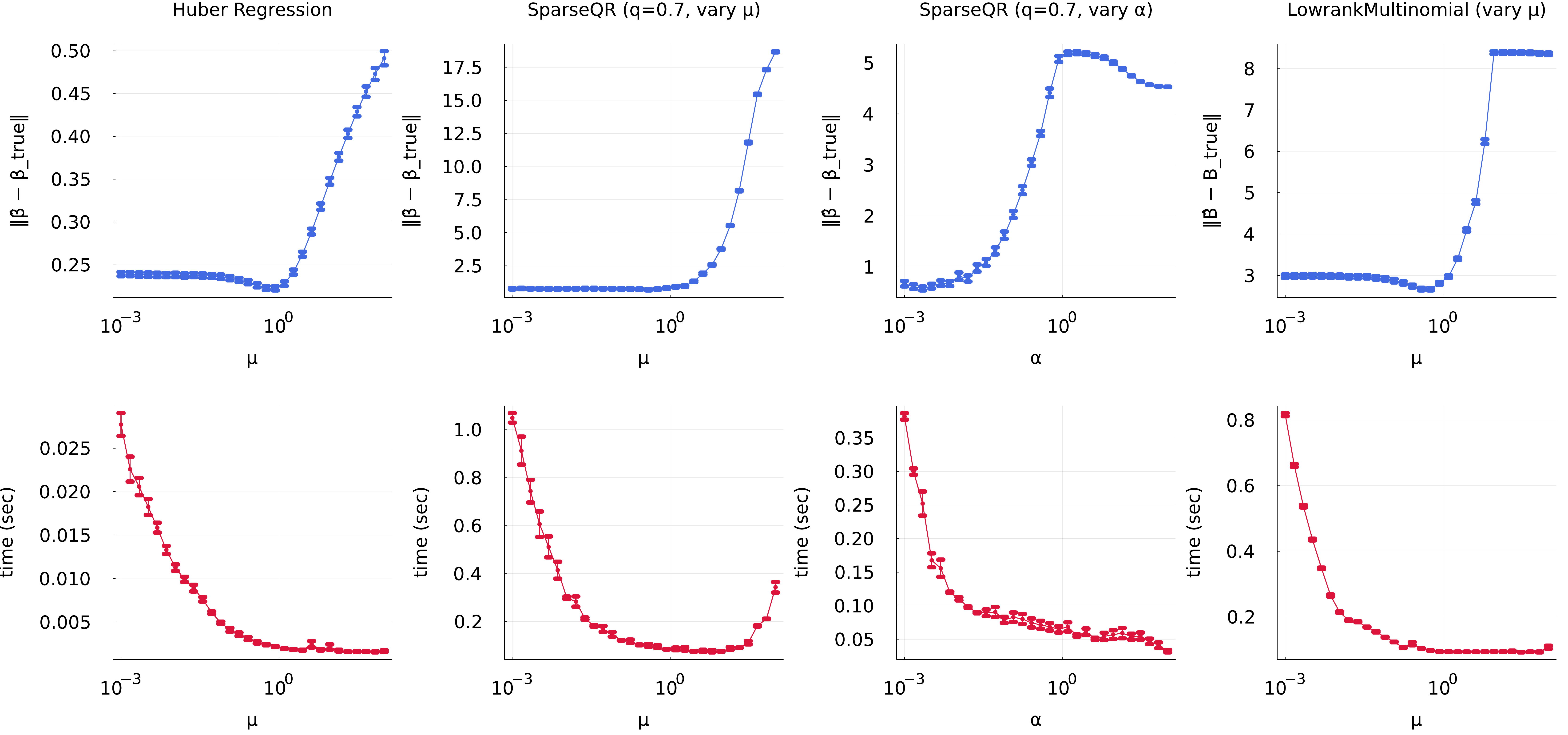}
  \caption{Estimation error and computation time for several problems under varying smoothing parameters. Data points are averaged over 50 random replicates, and the error bar denotes $\pm 1$ standard deviation. }
  \label{fig:impact}
\end{figure*}

Figure~\ref{fig:impact} shows that an overly large $\mu$ or $\alpha$ leads to gross estimation errors. In general, estimation error decreases as $\mu$ and $\alpha$ decrease. Interestingly, however, the smallest smoothing constant does not always yield the most accurate estimates. For Huber regression, the lowest estimation error occurs at a small but nonnegligible value of $\mu$ (near 1), suggesting the existence of an optimal smoothing level. This phenomenon has been highlighted in recent papers on adaptive Huber regression \cite{sun2020adaptive} and smoothed quantile regression \cite{fernandes2021smoothing} and can be understood as an
example of the bias–variance tradeoff. Although smoothing the check loss introduces bias in estimating regression coefficients, it reduces variance and improves statistical efficiency, so the resulting estimators may achieve lower mean squared error in finite samples. 

For sparse quantile regression, estimation error (under the best $\lambda$) appears to plateau once the smoothing constant becomes sufficiently small. For low-rank multinomial regression, there also appears to be an optimal smoothing level. As the second row of Figure~\ref{fig:impact} shows, the computation time of our MM algorithms generally decreases as the smoothing constant grows. These observations suggest a simple practical guideline. Rather than driving $\mu$ and $\alpha$ to near zero, choosing a small but nonnegligible smoothing constant simplifies optimization and yields estimation accuracy competitive with its nonsmooth counterparts.

\section{Discussion}

The current paper stresses the MM principle, smoothing by Moreau envelopes, and the recycling of matrix decompositions. The MM principle simplifies optimization by: (a) separating the variables of a problem, (b) avoiding large matrix inversions, (c) linearizing a problem, (d) restoring symmetry, (e) dealing with equality and inequality constraints gracefully, and (f) turning a nondifferentiable problem into a smooth problem. Our examples feature quadratic surrogates, by far the most common surrogates and the easiest to implement in practice. Quadratic surrogates are not necessarily parameter separated, and their stationary conditions require solving linear equations. This is where recycling of matrix decompositions comes into play. Cholesky, QR, and spectral decompositions are expensive to extract in high dimensions, so there is strong motivation to employ quadratic surrogates with the same curvature across all iterations. Deweighting and Moreau majorization are two tactics achieving this goal. Although Cholesky decompositions are more likely to suffer from ill conditioning than QR decompositions, they are faster to compute and feature exclusively in our numerical examples. 

Many loss and penalty functions are nondifferentiable. Smoothing by Moreau envelopes and other approximation methods removes the kinks and allows the standard rules of calculus to apply in optimization. A Moreau envelope can be majorized by a spherical quadratic with parameters separated. This advantage extends to squared distance-to-set penalties in constrained optimization. These majorizations are not particularly tight, so it often takes many iterations to converge. Fortunately, Nesterov acceleration mitigates most of the drawbacks. Our
numerical examples illustrate the virtues of these tactics. Computational speed is often an order of magnitude better than that delivered by popular competing methods. More subtly, statistical inference is improved by the parsimony imposed by regularization and by outlier flagging. 

In any event, it is the combination of tactics that produces the fastest and most reliable algorithms. Techniques such as alternating minimization (block descent and ascent), profile log-likelihoods, penalty constant annealing \cite{zhou2010bumpy}, and quasi-Newton methods \cite{zhou2011quasi} have also proved valuable in many settings. Readers should keep in mind that the MM principle can be invoked on the subproblems of alternating minimization. Progress, not perfection, is usually adequate in solving the subproblems. Optimization requires both art and science. Although there is no panacea, overarching themes can guide the construction of good algorithms. Our examples vividly demonstrate the speed improvements achievable through thoughtful algorithm design. There are doubtless many advances yet to be made in accelerating the optimization algorithms so vital to the progress of statistics. Readers who want to replicate our experiments and extend our Julia code can visit the website \url{https://github.com/qhengncsu/MMDeweighting.jl}.

\begin{funding}
This research was partially funded by grants from the National Institutes of Health (R35GM141798, HZ and KL; R01DK142026, HZ) and the National Science Foundation (DMS-2054253 and IIS-2205441, HZ).
\end{funding}

\begin{appendix}

\section{Deweighting the Sharpest LAD Majorization}\label{double_major}

In LAD regression, deweighting the sharpest quadratic majorization \eqref{bestquadratic} 
leads to the surrogate \eqref{ladsurrogate} derived from the Moreau majorization 
\eqref{Moreau_majorization}. To prove this assertion, recall that the Moreau surrogate 
\eqref{ladsurrogate} is
\begin{equation*}
    g(\bbeta \mid \bbeta_m) = \frac{1}{2\mu}\sum_{i=1}^n (r_i - z_{mi})^2,
\end{equation*}
where $r_i = y_i - \bx_i^\top\bbeta$ and $z_{mi} = \prox_{\mu|\cdot|}(r_{mi})$. 
On the other hand, the best quadratic majorization \eqref{bestquadratic} in LAD is
\begin{equation*}
    g_1(\bbeta \mid \bbeta_m) = \frac{1}{2\mu}\sum_{i=1}^n w_{mi} r_i^2 + c_m,
\end{equation*}
where $w_{mi} = 1$ if $|r_{mi}| < \mu$ and $w_{mi} = \mu/|r_{mi}|$ if $|r_{mi}| \ge \mu$. 
In both cases the weights satisfy $0 \le w_{mi} \le 1$. Deweighting 
$g_1(\bbeta \mid \bbeta_m)$ yields the new surrogate
\begin{equation*}
    g_2(\bbeta \mid \bbeta_m) = \frac{1}{2\mu}\sum_{i=1}^n 
    \bigl[w_{mi} y_i + (1 - w_{mi})\bx_i^\top\bbeta_m - \bx_i^\top\bbeta\bigr]^2.
\end{equation*}
When $|r_{mi}| \le \mu$,
\begin{equation*}
    w_{mi} y_i + (1 - w_{mi})\bx_i^\top\bbeta_m 
    \;=\; y_i 
    \;=\; y_i - \prox_{\mu|\cdot|}(r_{mi}),
\end{equation*}
and when $|r_{mi}| > \mu$,
\begin{align*}
    w_{mi} y_i + (1 - w_{mi})\bx_i^\top\bbeta_m 
    &= y_i - (1 - w_{mi})(y_i - \bx_i^\top\bbeta_m) \\
    &= y_i - (1 - w_{mi})\, r_{mi} \\
    &= y_i - (r_{mi} - \mu) \\
    &= y_i - \prox_{\mu|\cdot|}(r_{mi}).
\end{align*}
In both cases, $w_{mi}y_i+(1-w_{mi})\bx_i^\top\bbeta_m = y_i - z_{mi}$. 
It follows that $g_2(\bbeta\mid\bbeta_m)=g(\bbeta\mid\bbeta_m)$, 
completing the proof.
\section{Robust Isotonic Regression} \label{robust_isotonic_appendix}
The $\text{L}_2\text{E}$ version of robust isotonic regression is driven by the 
penalized loss
\begin{align}\label{eq:robustisotonic}
    f(\bbeta, \tau) &= \frac{\tau}{2\sqrt{\pi}} - \frac{\tau}{n}\sqrt{\frac{2}{\pi}} 
    \sum_{i=1}^{n} e^{-\frac{\tau^{2}}{2}(y_i - \beta_i)^2} \notag \\
    &\quad + \frac{\lambda}{2} \operatorname{dist}^2\!\bigl(\bD\bbeta,\, 
    \mathbb{R}^{p-1}_+\bigr),
\end{align}
where $\bD \in \Real^{(p-1)\times p}$ is a matrix that generates the differences of adjacent components of 
$\bbeta$, and $\mathbb{R}^{p-1}_+$ is the $(p-1)$-dimensional nonnegative orthant. 
Our double majorization of the $\text{L}_2\text{E}$ loss and distance majorization 
of the set penalty together produces the MM surrogate
\begin{align}\label{eq:suristonic}
    g(\bbeta \mid \bbeta_m) &= \frac{\tau^3}{2n}\sqrt{\frac{2}{\pi}} \sum_{i=1}^n 
    \bigl[w_{mi} y_i + (1 - w_{mi})\beta_{mi} - \beta_i\bigr]^2 \notag \\
    &\quad + \frac{\lambda}{2}\bigl\| \bD\bbeta - 
    P_{\mathbb{R}^{p-1}_+}(\bD\bbeta_m) \bigr\|_2^2 + c_m.
\end{align}
The stationary condition for updating $\bbeta$ is
\begin{align}\label{eq:stationaryisotonic}
    &\left(\frac{\tau^3}{n}\sqrt{\frac{2}{\pi}}\, \bI_p 
    + \lambda \bD^\top \bD\right) \bbeta \notag \\
    &\quad = \frac{\tau^3}{n}\sqrt{\frac{2}{\pi}}\, \tilde{\by} 
    + \lambda\, \bD^\top P_{\mathbb{R}^{p-1}_+}(\bD\bbeta_m),
\end{align}
where $\tilde{\by}$ is the vector of shifted responses with components 
$\tilde{y}_i = w_{mi} y_i + (1 - w_{mi})\beta_{mi}$. The spectral decomposition of $\bD^\top\bD$ is computed once and recycled across all iterations and all values of $\tau$ and $\lambda$.
Block descent alternates 
the updates of $\tau$ and $\bbeta$. The precision parameter $\tau$ can be updated 
by gradient descent \cite{heng2023robust} or an approximate Newton's method 
\cite{liu2023sharper}.

\section{Sparse Quantile Regression under Gaussian Noise}
\label{sqrexperimentmore}
Table \ref{tab:sqrn} records the simulation results for sparse quantile regression under $\mathcal{N}(0,2)$ noise.
\begin{table*}[htbp]
\centering
\begin{tabular}{cccccc} 
\toprule
Method                         & TPR         & FPR         & EE          & PE           & Time (s)  \\ 
\midrule
\multicolumn{6}{c}{$(n=500,p=250), q=0.5$}                                                       \\
SQR-Lasso                      & 1.00 (0.00) & 0.09 (0.04) & 0.48 (0.09) & 8.64 (1.34)  & 1.43      \\
SQR-SCAD                       & 1.00 (0.03) & 0.00 (0.00) & 0.24 (0.19) & 4.46 (2.56)  & 1.32      \\
SQR-MCP                        & 1.00 (0.00) & 0.00 (0.00) & \bf{0.22 (0.06)} & \bf{4.13 (0.94)}  & 1.39      \\
SQR-$\ell_0$                   & 1.00 (0.00) & 0.00 (0.00) & 0.26 (0.07) & 4.99 (1.17)  & \bf{0.52}      \\
SQR-PD                         & 1.00 (0.00) & 0.00 (0.00) & 0.24 (0.07) & 4.58 (1.30)  & 9.75      \\ 
\midrule
\multicolumn{6}{c}{$(n=500,p=250), q=0.7$}                                                       \\
SQR-Lasso                      & 1.00 (0.00) & 0.11 (0.04) & 0.51 (0.09) & 9.26 (1.44)  & 1.54      \\
SQR-SCAD                       & 1.00 (0.00) & 0.00 (0.00) & \bf{0.24 (0.07)} & \bf{4.38 (1.11)}  & 1.42      \\
SQR-MCP                        & 1.00 (0.03) & 0.00 (0.00) & 0.26 (0.19) & 4.74 (2.72)  & 1.47      \\
SQR-$\ell_0$                   & 1.00 (0.00) & 0.00 (0.00) & 0.28 (0.08) & 5.28 (1.37)  & \bf{0.58}      \\
SQR-PD                         & 1.00 (0.00) & 0.00 (0.00) & 0.26 (0.09) & 4.75 (1.63)  & 10.22     \\ 
\midrule
\multicolumn{6}{c}{$(n=250,p=500), q=0.5$}                                                       \\
SQR-Lasso                      & 1.00 (0.00) & 0.06 (0.02) & 0.90 (0.17) & 11.25 (1.90) & 2.14      \\
SQR-SCAD                       & 0.93 (0.09) & 0.00 (0.00) & 0.79 (0.57) & 8.79 (5.44)  & 1.85      \\
SQR-MCP                        & 0.91 (0.10) & 0.00 (0.00) & 0.90 (0.58) & 9.94 (5.68)  & 1.88      \\
SQR-$\ell_0$                   & 0.99 (0.04) & 0.00 (0.00) & \bf{0.56 (0.26)} & \bf{6.75 (2.42)}  & \bf{0.64}      \\
SQR-PD                         & 0.97 (0.09) & 0.01 (0.01) & 0.84 (0.79) & 9.39 (6.00)  & 25.87     \\ 
\midrule
\multicolumn{6}{c}{$(n=250,p=500), q=0.7$}                                                       \\
SQR-Lasso                      & 1.00 (0.00) & 0.06 (0.02) & 0.94 (0.19) & 11.82 (2.26) & 2.25      \\
SQR-SCAD                       & 0.91 (0.10) & 0.00 (0.00) & 0.93 (0.59) & 10.47 (5.91) & 1.92      \\
SQR-MCP                        & 0.92 (0.09) & 0.00 (0.00) & 0.88 (0.58) & 9.71 (5.64)  & 1.99      \\
SQR-$\ell_0$                   & 0.98 (0.06) & 0.00 (0.00) & \bf{0.67 (0.36)} & \bf{8.56 (3.25)}  & \bf{0.77}      \\
SQR-PD                         & 0.98 (0.07) & 0.01 (0.01) & 0.79 (0.60) & 9.67 (5.19)  & 26.80     \\
\bottomrule
\end{tabular}
\caption{Simulation results for sparse quantile regression under $\mathcal{N}(0,2)$ noise.}
\label{tab:sqrn}
\end{table*}
\section{Proofs \label{proofs_appendix}}
\subsection{Proof of Proposition \ref{thm:convexconverge}}

\begin{proof}
Our attack exploits the inner product $\langle \balpha, \bbeta \rangle_{\bH} = 
\balpha^\top \bH \bbeta$ and corresponding norm $\|\bbeta\|_{\bH} = 
\sqrt{\bbeta^\top \bH \bbeta}$ associated with the positive definite second 
differential $\bH = d^2g(\bbeta \mid \bbeta)$. Taking 
$\balpha = \bbeta - \bH^{-1}\nabla f(\bbeta)$ in the majorization
\begin{align}
    f(\balpha) &\le f(\bbeta) + \nabla f(\bbeta)^\top (\balpha - \bbeta) \notag \\
    &\quad + \frac{1}{2}(\balpha - \bbeta)^\top \bH(\balpha - \bbeta) 
    \label{Hmajorization}
\end{align}
leads to the conclusion
\begin{align}\label{eq:kl}
    \inf_{\balpha}\, f(\balpha) &\le f(\bbeta) 
    - \frac{1}{2} \nabla f(\bbeta)^\top \bH^{-1} \nabla f(\bbeta).
\end{align}
Now consider the function $g_{\bbeta}(\balpha) = f(\balpha) - \nabla f(\bbeta)^\top \balpha$. 
It is convex, achieves its minimum at the stationary point $\balpha = \bbeta$, and 
satisfies the analogues of inequalities \eqref{Hmajorization} and \eqref{eq:kl}.
Therefore,
\begin{align*}
    &f(\balpha) - f(\bbeta) - \nabla f(\bbeta)^\top (\balpha - \bbeta) \\
    &\quad = g_{\bbeta}(\balpha) - g_{\bbeta}(\bbeta) \\
    &\quad \ge \frac{1}{2} \nabla g_{\bbeta}(\balpha)^\top \bH^{-1} 
    \nabla g_{\bbeta}(\balpha) \\
    &\quad = \frac{1}{2} [\nabla f(\bbeta) - \nabla f(\balpha)]^\top 
    \bH^{-1}[\nabla f(\bbeta) - \nabla f(\balpha)].
\end{align*}
By symmetry,
\begin{align*}
    &f(\bbeta) - f(\balpha) - \nabla f(\balpha)^\top (\bbeta - \balpha) \\
    &\quad \ge \frac{1}{2} [\nabla f(\bbeta) - \nabla f(\balpha)]^\top 
    \bH^{-1}[\nabla f(\bbeta) - \nabla f(\balpha)],
\end{align*}
and adding the last two inequalities gives
\begin{align*}
    &[\nabla f(\bbeta) - \nabla f(\balpha)]^\top 
    \bH^{-1}[\nabla f(\bbeta) - \nabla f(\balpha)] \\
    &\quad \le [\nabla f(\bbeta) - \nabla f(\balpha)]^\top (\bbeta - \balpha).
\end{align*}
We are now in a position to prove that the operator 
$S(\bbeta) = \bbeta - 2\bH^{-1} \nabla f(\bbeta)$ is non-expansive in the 
norm $\|\cdot\|_{\bH}$. Indeed,
\begin{align*}
    &\|S(\bbeta) - S(\balpha)\|_{\bH}^2 \\
    &\quad = \|\bbeta - \balpha\|^2_{\bH} 
    - 4(\bbeta - \balpha)^\top [\nabla f(\bbeta) - \nabla f(\balpha)] \\
    &\qquad + 4[\nabla f(\bbeta) - \nabla f(\balpha)]^\top 
    \bH^{-1}[\nabla f(\bbeta) - \nabla f(\balpha)] \\
    &\quad \le \|\bbeta - \balpha\|^2_{\bH}.
\end{align*}
The related operator
\begin{align*}
    T(\bbeta) &= \frac{1}{2}\bbeta + \frac{1}{2}S(\bbeta) 
    = \bbeta - \bH^{-1}\nabla f(\bbeta)
\end{align*}
is $\frac{1}{2}$-averaged with fixed points equal to the stationary points of 
$f(\bbeta)$. To complete the proof, we note that the sequence 
$\bbeta_{m+1} = T(\bbeta_m)$ defined by an averaged operator is known to converge 
to a fixed point; see, for example, Proposition~7.5.2 of \cite{lange2016mm} or 
Corollary~22.20 of \cite{bauschke2023introduction}.
\end{proof}

\subsection{Proof of Proposition \ref{linear_conv_prop}}
\begin{proof}
Existence and uniqueness of $\balpha$ follow from the strong convexity of $f(\bbeta)$. 
Because $\nabla g(\balpha \mid \balpha) = \nabla f(\balpha) = \mathbf{0}$, the 
$L$-smoothness of $g(\bbeta \mid \balpha)$ gives the quadratic upper bound
\begin{align}
    f(\bbeta) - f(\balpha) 
    &\le g(\bbeta \mid \balpha) - g(\balpha \mid \balpha) \notag \\
    &\le \nabla g(\balpha \mid \balpha)^\top (\bbeta - \balpha)
    + \frac{L}{2}\|\bbeta - \balpha\|_2^{2} \label{eq:prop2-pl-upper} \\
    &= \frac{L}{2}\|\bbeta - \balpha\|_2^{2}, \notag
\end{align}
which incidentally implies $\mu \le L$. In view of the strong convexity assumption, 
we have the lower bound
\begin{align}
    \|\nabla f(\bbeta)\|_2 \cdot \|\balpha - \bbeta\|_2
    &\ge -\nabla f(\bbeta)^\top (\balpha - \bbeta) \notag \\
    &\ge f(\balpha) - f(\bbeta) - \nabla f(\bbeta)^\top (\balpha - \bbeta)
    \label{eq:prop2-pl-lower} \\
    &\ge \frac{\mu}{2}\|\balpha - \bbeta\|_2^{2}. \notag
\end{align}
Therefore, $\|\nabla f(\bbeta)\|_2 \ge \frac{\mu}{2}\|\balpha - \bbeta\|_2$. 
Combining \eqref{eq:prop2-pl-upper} and \eqref{eq:prop2-pl-lower} yields the 
Polyak-\L{}ojasiewicz (PL) bound
\begin{align*}
    \|\nabla f(\bbeta)\|_2^2 &\ge \frac{\mu^{2}}{2L}\bigl[f(\bbeta) - f(\balpha)\bigr].
\end{align*}
We now turn to the MM iterates and take 
$\bbeta = \bbeta_{m} - \frac{1}{L}\nabla f(\bbeta_{m})$. 
The PL inequality implies
\begin{align*}
    f(\bbeta_{m+1}) - f(\bbeta_{m})
    &\le g(\bbeta_{m+1} \mid \bbeta_{m}) - g(\bbeta_{m} \mid \bbeta_{m}) \\
    &\le g(\bbeta \mid \bbeta_{m}) - g(\bbeta_{m} \mid \bbeta_{m}) \\
    &\le -\frac{1}{L}\nabla f(\bbeta_{m})^\top \nabla f(\bbeta_{m}) \\
    &\quad + \frac{1}{2L^2} \cdot L \|\nabla f(\bbeta_{m})\|_2^{2} \\
    &= -\frac{1}{2L}\|\nabla f(\bbeta_{m})\|_2^{2} \\
    &\le -\frac{\mu^2}{4L^2}\bigl[f(\bbeta_{m}) - f(\balpha)\bigr],
\end{align*}
Subtracting $f(\balpha)$ from both sides and rearranging gives
\begin{align*}
    f(\bbeta_{m+1}) - f(\balpha) 
    &\le \left(1 - \frac{\mu^{2}}{4L^{2}}\right)\bigl[f(\bbeta_{m}) - f(\balpha)\bigr].
\end{align*}
Iteration of this inequality yields the claimed linear convergence.
\end{proof}

\end{appendix}

\bibliographystyle{imsart-number}
\bibliography{LSQtactics}  
\end{document}